\def\llncs{0}
\def\fullpage{1}
\def\anonymous{0}
\def\authnote{1}
\def\notxfont{1}
\def\submission{0}
\def\llncs{1}
\definecolor{darkblue}{rgb}{0,0,0.6}
\definecolor{darkgreen}{rgb}{0,0.5,0}
\definecolor{maroon}{rgb}{0.5,0.1,0.1}
\definecolor{dpurple}{rgb}{0.2,0,0.65}
\DeclareMathAlphabet{\mathpzc}{OT1}{pzc}{m}{it}
\newtheoremstyle{thicktheorem}%
{\topsep}
{\topsep}
{\itshape}{}%
{\bfseries}%
{.}
{ }%
{\thmname{#1}\thmnumber{ #2}%
		\thmnote{ (#3)}%
}
\newtheoremstyle{remark}
{\topsep}
{\topsep}
	{}
	{}
	{}
	{.}
	{ }
	{\textit{\thmname{#1}}\thmnumber{ #2}
			\thmnote{ (#3)}%
	}
	\theoremstyle{thicktheorem}
	\newtheorem{theorem}{Theorem}[section]
	\newtheorem{lemma}[theorem]{Lemma}
	\newtheorem{corollary}[theorem]{Corollary}
	\newtheorem{definition}[theorem]{Definition}
	\theoremstyle{remark}
\Crefname{MyClaim}{Claim}{Claims}
	\crefname{theorem}{Theorem}{Theorems}
	\crefname{assumption}{Assumption}{Assumptions}
	\crefname{construction}{Construction}{Constructions}
	\crefname{corollary}{Corollary}{Corollaries}
	\crefname{conjecture}{Conjecture}{Conjectures}
	\crefname{definition}{Definition}{Definitions}
	\crefname{exmaple}{Example}{Examples}
	\crefname{experiment}{Experiment}{Experiments}
	\crefname{counterexample}{Counterexample}{Counterexamples}
	\crefname{lemma}{Lemma}{Lemmata}
	\crefname{observation}{Observation}{Observations}
	\crefname{proposition}{Proposition}{Propositions}
	\crefname{remark}{Remark}{Remarks}
	\crefname{claim}{Claim}{Claims}
	\crefname{fact}{Fact}{Facts}
	\crefname{note}{Note}{Notes}
 \crefname{appendix}{App.}{Appendices}
 \crefname{section}{Sec.}{Sections}
\renewcommand*{\backref}[1]{}
	\renewcommand*{\backref}[1]{(Cited on page~#1.)}
\newcommand{\mor}[1]{}
\newcommand{\minki}[1]{}
\newcommand{\takashi}[1]{}
\newcommand{\michael}[1]{}
\newcommand{\mor}[1]{$\ll$\textsf{\color{red} Tomoyuki: { #1}}$\gg$}
\newcommand{\takashi}[1]{$\ll$\textsf{\color{orange} Takashi: { #1}}$\gg$}
\newcommand{\minki}[1]{$\ll$\textsf{\color{darkgreen} Minki: { #1}}$\gg$}
\newcommand{\michael}[1]{$\ll$\textsf{\color{darkgreen} Michael: { #1}}$\gg$}
\newcommand{\Tr}{\mathrm{Tr}}
\newcommand{\StateGen}{\mathsf{StateGen}}
\newcommand{\abs}[1]{|#1|}
\newcommand{\cA}{\mathcal{A}}
\newcommand{\cH}{\mathcal{H}}
\newcommand{\cX}{\mathcal{X}}
\newcommand{\cY}{\mathcal{Y}}
\def\makeuppercase#1{
\expandafter\newcommand\csname tl#1\endcsname{\widetilde{#1}}
}
\def\makelowercase#1{
\expandafter\newcommand\csname tl#1\endcsname{\widetilde{#1}}
}
\newcommand{\regC}{\mathbf{C}}
\newcommand{\regR}{\mathbf{R}}
\newcommand{\regZ}{\mathbf{Z}}
\newcommand{\regB}{\mathbf{B}}
\newcommand{\regA}{\mathbf{A}}
\newcommand{\regG}{\mathbf{G}}
\newcommand{\eps}{\epsilon}
\newcommand{\secp}{\lambda}
\newcommand{\A}{\entity{A}}
\newcommand*{\algo}[1]{\ensuremath{\mathsf{#1}}}
\newcommand*{\entity}[1]{\mathcal{#1}}
\newenvironment{boxfig}[2]{\begin{figure}[#1]\fbox{\begin{minipage}{0.97\linewidth}
                        \vspace{0.2em}
                        \makebox[0.025\linewidth]{}
                        \begin{minipage}{0.95\linewidth}
            {{
                        #2 }}
                        \end{minipage}
                        \vspace{0.2em}
                        \end{minipage}}}{\end{figure}}
\newcommand{\bit}{\{0,1\}}
\newcommand{\KeyGen}{\algo{KeyGen}}
\newcommand{\Ver}{\algo{Ver}}
\newcommand{\TD}{\algo{TD}}
\newcommand{\negl}{{\mathsf{negl}}}
\newcommand{\poly}{{\mathsf{poly}}}
\DeclareRobustCommand
\newcommand{\proj}[1]{\ket{#1}\!\!\bra{#1}}
\renewcommand{\paragraph}[1]{\medskip\noindent\textbf{#1}}
\newcommand{\email}[1]{\href{mailto:#1}{#1}}
\numberwithin{equation}{section}
\title{Exponential Quantum One-Wayness and EFI Pairs}
\author{\empty}\institute{\empty}
\author{}
\author{
Giulio Malavolta\inst{1,2}
\and Tomoyuki Morimae\inst{3}
 \and Michael Walter\inst{4}
 \and Takashi Yamakawa\inst{5,6,3}
}
\institute{
Bocconi University, Milan, Italy
\and Max Planck Institute for Security and Privacy, Bochum, Germany
\and	Yukawa Institute for Theoretical Physics, Kyoto University, Kyoto, Japan
\and Faculty of Computer Science, Ruhr University Bochum, Bochum, Germany
 \and NTT Social Informatics Laboratories, Tokyo, Japan
 \and NTT Research Center for Theoretical Quantum Information, Atsugi, Japan
}
\author[1,2]{Giulio Malavolta}
\author[3]{Tomoyuki Morimae}
\author[4]{Michael Walter}
\author[5,6,3]{Takashi Yamakawa}
\affil[1]{{\small Bocconi University, Milan, Italy}\authorcr{\small \email{giulio.malavolta@hotmail.it}}}
\affil[2]{{\small Max Planck Institute for Security and Privacy, Bochum, Germany}}
\affil[3]{{\small Yukawa Institute for Theoretical Physics, Kyoto University, Kyoto, Japan}\authorcr{\small \email{tomoyuki.morimae@yukawa.kyoto-u.ac.jp}}}
\affil[4]{{\small Faculty of Computer Science, Ruhr University Bochum, Bochum, Germany}\authorcr{\small \email{michael.walter@rub.de}}}
\affil[5]{{\small NTT Social Informatics Laboratories, Tokyo, Japan}\authorcr{\small \email{takashi.yamakawa@ntt.com}}}
\affil[6]{{\small NTT Research Center for Theoretical Quantum Information, Atsugi, Japan}}
\date{}
\begin{document}

\maketitle
\begin{abstract}
In classical cryptography, one-way functions are widely considered to be the minimal computational assumption.
However, when taking quantum information into account, the situation is more nuanced.
There are currently two major candidates for the minimal assumption:
the \emph{search} quantum generalization of one-way functions are one-way state generators (OWSG), whereas the \emph{decisional} variant are EFI pairs.
A well-known open problem in quantum cryptography is to understand how these two primitives are related.
A recent breakthrough result of Khurana and Tomer (STOC'24) shows that OWSGs imply EFI pairs, for the restricted case of pure states.

In this work, we make progress towards understanding the general case.
To this end, we define the notion of \emph{inefficiently-verifiable one-way state generators}~(IV-OWSGs), where the verification algorithm is not required to be efficient, and show that these are precisely equivalent to EFI pairs, with an exponential loss in the reduction.
Significantly, this equivalence holds also for \emph{mixed} states.
Thus our work establishes the following relations among these fundamental primitives of quantum cryptography:
\[
 \text{(mixed) OWSGs}  \implies  \text{(mixed) IV-OWSGs} \equiv_\text{exp} \text{EFI pairs},
\]
where $\equiv_\text{exp}$ denotes equivalence up to exponential security of the primitives.
\end{abstract}

\section{Introduction}
The existence of one-way functions (OWFs) is widely regarded as the minimal assumption in classical cryptography.
This is because almost all primitives imply OWFs and furthermore OWFs are in fact equivalent to many foundational primitives, such as secret-key encryption (SKE), commitments, zero-knowledge, pseudorandom generators (PRGs), pseudorandom functions (PRFs), and digital signatures~\cite{STOC:LubRac86,FOCS:ImpLub89,STOC:ImpLevLub89}.
However, recent works have suggested that OWFs may \emph{not} be the minimal assumption when bringing quantum information into the picture.
Instead, several candidate ``minimal'' primitives have been proposed that are potentially weaker than OWFs~\cite{Kre21,STOC23:KreQiaSinTal,cryptoeprint:2023/1602}, yet still enable many useful applications, such as private-key quantum money, SKE, commitments, multiparty computations, and digital signatures~\cite{C:JiLiuSon18,C:MorYam22,C:AnaQiaYue22,ITCS:BCQ23,AC:Yan22}.

In this work, we consider two major candidates that have emerged in this recent line of research:
\emph{one-way state generators (OWSGs)} and \emph{EFI pairs}.
OWSGs were introduced in~\cite{C:MorYam22,cryptoeprint:2022/1336} as a \emph{search} quantum generalization of OWFs.
Formally, a OWSG consists of a triple~$(\KeyGen,\StateGen,\Ver)$ of quantum polynomial-time (QPT) algorithms,
where~$\KeyGen(1^\secp)\to k$ is the key generation algorithm (keys are classical),
$\StateGen(k)\to \phi_k$ takes a key as input and generates a quantum state, and
$\Ver(k', \phi) \to \{\top,\bot\}$ is a verification algorithm that takes a bit string $k'$ and a quantum state $\phi$ as input.
The security of a OWSG requires that no QPT adversary~$\mathcal A$ can find a ``preimage'' of $\phi_k$ with non-negligible probability, that is,
\begin{align*}
    \Pr\left[\top\gets\Ver(k',\phi_k) \;:\; k\gets\KeyGen(1^\secp),\phi_k\gets\StateGen(k),k'\gets\cA(1^\secp,\phi_k^{\otimes t})\right]
\approx 0
\end{align*}
for any polynomial number~$t=t(\lambda)$ of copies.
This can be considered as a quantum analogue of the one-wayness of OWFs.

On the other hand, EFI pairs~\cite{ITCS:BCQ23} are a \emph{decisional} quantum generalization of OWFs.
Formally, an EFI pair consists of a QPT algorithm that on input~$1^\secp$ generates two (mixed) quantum states~$\xi_0$ and~$\xi_1$ which are statistically far but computationally indistinguishable.
As such, EFI pairs are a quantum generalization of EFIDs, which are pairs of efficiently samplable classical distributions that are statistically distinguishable but computationally indistinguishable, a primitive that is equivalent to OWFs~\cite{Gol90}.

Given that both of these quantum primitives generalize the same object in classical cryptography, it is natural to ask about their relation.
A recent breakthrough result~\cite{cryptoeprint:2023/1620} shows that OWSGs imply EFI pairs provided the outputs of~$\StateGen$ are \emph{pure} states.
The general case where the outputs of~$\StateGen$ are \emph{mixed states}, as well as the reverse direction of the implication, remain~open.

\subsection{Our Results}

In this work, we make progress towards understanding the relation between OWSGs and EFI pairs.
First, we define a weaker notion of OWSGs, wherein the verification algorithm~$\Ver$ is not required to be efficient.
We refer to this primitive as \emph{inefficiently-verifiable one-way state generators (IV-OWSGs)}.
Then, as the main technical contribution of our work, we show that IV-OWSGs and EFI pairs are equivalent, but with an exponential loss in the reduction.
That is, our work establishes the following relations amongst these fundamental primitives in quantum cryptography:
\[
 \text{(mixed) OWSGs}  \implies  \text{(mixed) IV-OWSGs}
\equiv_\text{exp} \text{EFI pairs}
\]
where
$\equiv_\text{exp}$
means equivalence with an exponential loss for the implication from the left to right (but only with a polynomial loss for the other direction)
and we write ``(mixed)'' to stress that all our results hold in the general case that
the $\StateGen$ of OWSGs returns mixed states.
The first implication is clear.
That EFI pairs imply (mixed) IV-OWSGs follows using known facts in quantum cryptography (\cref{thm1}), and the reduction does not incur any loss.
Our main technical contribution is to prove that 
IV-OWSGs imply EFI pairs with an exponential security loss (\cref{thm:EFI_from_IV-OWSG}).
Crucially, all our results hold in the general setting where the $\StateGen$ of OWSGs is allowed to generate mixed states.
In contrast, the recent breakthrough \cite{cryptoeprint:2023/1620} only considered \emph{pure} states (but their reduction has a polynomial loss, so the results are incomparable).
Our proof follows a different route, and makes crucial use of Aaronson's shadow tomography algorithm~\cite{Shadow2}.

Using known implications from the literature~\cite{ITCS:BCQ23}, a consequence of our work is that (mixed) OWSGs with exponential security imply a number of primitives in cryptography, such as non-interactive commitments, quantum computational zero knowledge, oblivious transfer, and general multiparty computation.
At a more conceptual level, our work also sheds some light on the relation between OWSGs and EFI pairs:
For instance, if one were able to show the outstanding implication that EFI pairs $\implies$ OWSGs (with a polynomial reduction), one would automatically obtain a generic conversion that turns any exponentially secure OWSGs with \emph{inefficient} verification into one with \emph{efficient} verification, which would perhaps be surprising.

\subsection{Proof Outline}
We give here a brief overview of the proof of our main technical contribution, namely that IV-OWSGs imply EFI pairs.
It is well-known that EFI pairs are equivalent to canonical quantum bit commitments~\cite{AC:Yan22}, and therefore our goal is to construct commitments from IV-OWSGs.
Consider an arbitrary IV-OWSG $(\KeyGen,\StateGen,\Ver)$.
We construct a non-interactive bit commitment scheme in the canonical form of~\cite{AC:Yan22} as follows.
A commitment to $0$ is a state
\begin{align*}
\ket{\psi_{0}}_{\regR,\regC}\coloneqq
\sum_{k}
\sum_{h\in\cH}
\sqrt{\frac{\Pr[k\gets\KeyGen(1^\secp)]}{|\cH|}}
\ket{k,\mathrm{junk}_k,h}_{\regC_1}\ket{h,h(k)}_{\regR_1}\ket{\Phi_k^{\otimes t}}_{\regC_2,\regR_2}\ket{0...0}_{\regR_3}
\end{align*}
whereas a commitment to $1$ is a state
\begin{align*}
\ket{\psi_1}_{\regR,\regC}\coloneqq
\sum_{k}
\sum_{h\in\cH}
\sqrt{\frac{\Pr[k\gets\KeyGen(1^\secp)]}{|\cH|}}
\ket{k,\mathrm{jun}k_k,h}_{\regC_1}\ket{h,h(k)}_{\regR_1}\ket{\Phi_k^{\otimes t}}_{\regC_2,\regR_2}\ket{k}_{\regR_3}
\end{align*}
where $t=t(\secp)$ is a certain polynomial specified later, $\mathcal{H}$ is a family of pairwise-independent hash functions,
$\ket{\mathrm{junk}_k}$ is the state of the non-output registers in a unitary realization of the $\KeyGen$ algorithm,%
\footnote{Without loss of generality, the $\KeyGen$ algorithm takes the following form:
apply a QPT unitary to generate a superposition~$\sum_k \sqrt{\Pr[k\gets\KeyGen(1^\secp)]}\ket{k}\ket{\mathrm{junk}_k}$,
measure the first register, and output the measurement result.
}
and $\ket{\Phi_k}$ is a purification of the output of the $\StateGen(k)$ algorithm.\footnote{
Without loss of generality, $\StateGen$ takes the following form: on input $k$, apply a QPT unitary~$U_k$ on~$\ket{0...0}$ to generate a pure state~$\ket{\Phi_k}_{\regA,\regB}=U_k|0...0\rangle$ and output the first register~$\regA$, which is in state $\phi_k = \Tr_{\regB}(\proj{\Phi_k})$.
Then the~$\regA$ registers of~$\ket{\Phi_k^{\otimes t}}$ make up~$\regR_2$, while the~$\regB$ registers make up~$\regC_2$.}
One should think of~$\regC\coloneqq(\regC_1,\regC_2)$ as the commitment register and of~$\regR\coloneqq(\regR_1,\regR_2,\regR_3)$ as the reveal register.
It is clear that both $\ket{\psi_0}$ and $\ket{\psi_1}$ can be efficiently generated.

Computational binding is shown with a reduction to the exponential one-wayness of IV-OWSG:
Assuming that there exists an efficient algorithm that converts $\ket{\psi_{0}}_{\regR,\regC}$ into $\ket{\psi_{1}}_{\regR,\regC}$ by acting only on $\regR$, we can obtain an algorithm that computes $k$ from $h$, $h(k)$, and $\phi_k^{\otimes t} = \Tr_{\regC_2}(\proj{\Phi_k^{\otimes t}})$. By randomly guessing $h(k)$, we can use it to break the security of IV-OWSG with an exponential security loss of $2^{|h(k)|}$ where $|h(k)|$ is the output length of $h$.

To show statistical hiding, it suffices to construct an (inefficient) unitary on $\regR$ that turns
$\ket{\psi_{0}}_{\regR,\regC}$ into $\ket{\psi_{1}}_{\regR,\regC}$ with a sufficiently good approximation.
To this end consider the following procedure:
\begin{enumerate}
    \item Apply shadow tomography~\cite{Shadow2} to list all $k'$ that are accepted by $\Ver(\cdot,\phi_k)$ with a sufficiently large probability.
    \item In this list, find $k^*$ such that $h(k)=h(k^*)$. If there is a single such $k^*$, output it. Otherwise, output $\bot$.
\end{enumerate}
Intuitively, this algorithm outputs the true key $k$ with probability at least $1/2-\negl(\secp)$,
because the list obtained by the shadow tomography contains~$k$ except for a negligible probability
and because we set the output length of $h$ appropriately so that, with probability at least~$1/2$, $h^{-1}(h(k))$ intersects the list in a single element.
By running this algorithm coherently, we get an (inefficient) unitary on $\regR$ that maps $\ket{\psi_0}_{\regR,\regC}$ close to $\ket{\psi_1}_{\regR,\regC}$.
Thus Uhlmann's theorem implies that the trace distance between
$\Tr_{\regR}(\ket{\psi_0}_{\regR,\regC})$
and
$\Tr_{\regR}(\ket{\psi_1}_{\regR,\regC})$
is small.
This shows statistical hiding.

\subsection{Paper Outline}
\cref{sec:preliminaries} is for preliminaries where some notations and basic definitions are given.
In \cref{sec:def_IV-OWSGs}, we define the new notion of IV-OWSGs.
In \cref{sec:EFI_IVOWSG}, we show that EFI pairs imply IV-OWSGs, unconditionally and without any loss.
In \cref{sec:exp_IVOWSG_EFI}, we show that EFI pairs can be constructed from exponentially-secure IV-OWSGs, with an exponential loss.

\section{Preliminaries}
\label{sec:preliminaries}

\subsection{Basic Notations}
We use notation that is standard in quantum computing and cryptography.
We use $\secp$ as the security parameter.
The notation~$[n]$ refers to the set~$\{1,2,...,n\}$.
For any set $S$, $x\gets S$ means that an element $x$ is sampled uniformly at random from the set~$S$.
For an algorithm $A$, $y\gets A(x)$ means that the algorithm outputs~$y$ on input~$x$.
For a set $S$, $|S|$ denotes its cardinality.
We write~$\negl$ to denote a negligible function and $\poly$ to mean a polynomial function.
QPT stands for quantum polynomial-time.
Any binary (2-outcome) quantum measurement can be described by a POVM element~$E$, which is an operator such that both~$E$ and~$I-E$ are positive semidefinite.
Quantum registers are denoted by bold font (e.g., $\regA$ and $\regB$).
We write~$\Tr_\regB(\rho_{\regA,\regB})$ for the partial trace over the register~$\regB$ of the bipartite state~$\rho_{\regA,\regB}$.
The notation~$I_\regA$ denotes the identity operator on register~$\regA$.
For any two quantum states $\rho$ and $\sigma$, their fidelity is~$F(\rho,\sigma)\coloneqq(\Tr\sqrt{\sqrt{\sigma}\rho\sqrt{\sigma}})^2$ and their trace distance is~$\TD(\rho,\sigma)\coloneqq\frac{1}{2}\|\rho-\sigma\|_1$.

\subsection{Computational Model for Adversaries}
Throughout the paper, we treat adversaries as \emph{non-uniform} QPT machine with \emph{quantum} advice.
However, all of our results extend to the uniform setting in a straightforward manner.

\subsection{Pairwise-Independent Hash Family}
We recall the definition of pairwise independence.

\begin{definition}[Pairwise-Independent Hash Family]
A family $\mathcal{H} = \{h:\cX\to\cY\}$ of functions is called a \emph{pairwise-independent hash family}
if, for any two $x\neq x'\in \cX$ and any two $y,y'\in\cY$,
\[ \Pr_{h\gets\mathcal{H}}[h(x)=y\wedge h(x')=y']=\frac{1}{|\cY|^2}. \]
\end{definition}

\subsection{Shadow Tomography}

The shadow tomography problem asks to predict a large number of measurement outcomes from copies of a quantum state.
More formally:

\begin{definition}[Shadow Tomography Problem~{\cite[Problem 1]{Shadow2}}]
Given $t$ copies of an unknown $d$-dimensional (possibly mixed) quantum state~$\rho$, as well as known binary measurements given by POVM elements~$E_1,\dots,E_M$, 
output numbers $b_1, \dots , b_M \in [0, 1]$ such that
\[
|b_j - \Tr (E_j\rho)| \leq \varepsilon
\]
for all~$j$, with success probability at least $1-\omega$.
\end{definition}

The following theorem from Aaronson provides a bound on the number of copies of the state needed to solve this problem.

\begin{theorem}[{\cite[Theorem 2]{Shadow2}}]\label{lem:ST}
The shadow tomography problem is solvable with
\[
    t = \Tilde{O}\left(\frac{\log \frac1\omega}{\varepsilon^4}\cdot\log^4M\cdot\log d\right),
\]
where the $\Tilde{O}$ hides a $\poly(\log\log M,\log\log d,\log\frac{1}{\epsilon})$ factor.
\end{theorem}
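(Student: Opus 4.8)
The statement, \cref{lem:ST}, is Aaronson's shadow tomography theorem quoted verbatim from~\cite{Shadow2}, so strictly there is nothing new to prove here; for completeness we sketch the argument one would follow.

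\textbf{Overall strategy.} The plan is to avoid reconstructing $\rho$ altogether and instead maintain an explicit hypothesis state $\sigma$, refining it only when it disagrees with $\rho$ on one of the $M$ measurements.
\begin{enumerate}
\item \emph{Online learning of quantum states (matrix multiplicative weights).} Start from $\sigma_0 = I/d$. As long as the current hypothesis $\sigma_i$ satisfies $|\Tr(E_j\sigma_i)-\Tr(E_j\rho)|\le\varepsilon/2$ for every $j\in[M]$, we are done and output $b_j\coloneqq\Tr(E_j\sigma_i)$ (these traces are computed classically, since $\sigma_i$ is known). Otherwise we will locate a ``violated'' index $j_i$ together with an $(\varepsilon/8)$-accurate estimate of $\Tr(E_{j_i}\rho)$, and perform a Gibbs-state update
\[
\sigma_{i+1}\ \propto\ \exp\!\Big(\textstyle\eta\sum_{\ell\le i}c_\ell\Big),
\]
where $c_\ell$ is $\pm E_{j_\ell}$, with the sign chosen according to whether the hypothesis at round $\ell$ over- or under-estimated $\Tr(E_{j_\ell}\rho)$, and $\eta=\Theta(\varepsilon)$. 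A potential-function argument tracking the quantum relative entropy $S(\rho\,\|\,\sigma_i)$ --- which is nonnegative, starts at most $\log d$, and drops by $\Omega(\varepsilon^2)$ per update --- bounds the number of updates by $L=O(\varepsilon^{-2}\log d)$.
\item \emph{Gentle search for a violated measurement.} In each round we are handed a fresh batch of $K$ copies of $\rho$ and must either certify that $\sigma_i$ is $(\varepsilon/2)$-accurate on all $M$ measurements or return a violated $j_i$ with an estimate of $\Tr(E_{j_i}\rho)$. Since measurement consumes copies, we cannot afford independent estimates of all $M$ values; instead we run a threshold-search / quantum-OR-type routine that sweeps over the $M$ measurements so that each individual test only \emph{gently} disturbs the remaining copies, using the gentle-measurement lemma together with amplitude amplification. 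This costs $K=\tilde{O}(\varepsilon^{-2}\log^4 M)$ copies per round and succeeds with high constant probability.
\item \emph{Combining and amplifying.} Multiplying the number of rounds by the per-round cost gives $t=L\cdot K=\tilde{O}\!\big(\varepsilon^{-4}\cdot\log^4 M\cdot\log d\big)$. Repeating the whole procedure $O(\log\tfrac1\omega)$ times and taking, for each $j$, the median of the reported numbers boosts the success probability to $1-\omega$, accounting for the $\log\tfrac1\omega$ factor.
\end{enumerate}

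\textbf{Main obstacle.} The crux is Step~2: organizing the search over all $M$ measurements so that no test spoils the copies needed for later tests, while still detecting a genuine $\varepsilon$-violation with high probability and almost never rejecting a hypothesis that is good for all $M$ measurements. This is where the gentle-measurement lemma, amplitude amplification, and a careful error budget spread across the $M$ tests are needed, and it is what is responsible for the $\log^4 M$ dependence (subsequent work sharpens this to $\log^2 M$). By contrast, the regret bound in Step~1 and the median amplification in Step~3 are routine.
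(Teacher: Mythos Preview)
Your assessment is correct: the paper does not prove \cref{lem:ST} at all but simply quotes it from~\cite{Shadow2}, and you rightly recognize this. The sketch you supply goes beyond what the paper offers and is a faithful outline of Aaronson's online-learning-plus-gentle-search argument, so there is nothing to correct.
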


\subsection{EFI Pairs and Quantum Bit Commitments}

EFI pairs are pairs of {\bf e}fficiently generatable quantum states that are statistically {\bf f}ar, yet computationally {\bf i}ndistinguishable.
They were introduced in \cite{ITCS:BCQ23}.

\begin{definition}[EFI Pair~\cite{ITCS:BCQ23}]
An \emph{EFI pair} is a family $\{\xi_{\secp,b}\}_{\secp\in\mathbb{N},\ b\in\bit}$ of (mixed) quantum states that satisfies the following conditions:
\begin{itemize}
\item (Efficiently Generatable)
There is a uniform QPT algorithm that generates~$\xi_{\secp,0}$ and~$\xi_{\secp,1}$ on input $1^\secp$.
\item (Statistically Far)
It holds that
$\TD(\xi_{\secp,0},\xi_{\secp,1})\ge 1/\poly(\secp)$.
\item (Computationally Indistinguishable)
 For any non-uniform QPT distinguisher $\cA$, there exists a negligible function~$\negl$ such that

\begin{align*}
 \left|\Pr[\A(1^\secp,\xi_{\secp,0})=1]-\Pr[\A(1^\secp,\xi_{\secp,1})=1]\right|\le \negl(\secp).
\end{align*}
\end{itemize}
\end{definition}

It is known that EFI pairs exist if and only if quantum bit commitments exist.
We define canonical quantum bit commitments following~\cite{AC:Yan22}.

\begin{definition}[Canonical Quantum Bit Commitments~\cite{AC:Yan22}]\label{def:canonical_com}
A \emph{canonical quantum bit commitment scheme} consists of a family $\{Q_0(\secp),Q_1(\secp)\}_{\secp\in \mathbb{N}}$ of uniform QPT unitaries.
Each acts on two registers $\regC$ (called the \emph{commitment register}) and $\regR$ (called the \emph{reveal register}).
In the rest of the paper, we often omit $\secp$ and simply write $Q_0$ and $Q_1$ to mean $Q_0(\secp)$ and $Q_1(\secp)$.
We define two properties that can be satisfied by canonical quantum bit commitments:
\begin{itemize}
\item (Hiding)
The scheme is \emph{computationally (resp.\ statistically) $\epsilon$-hiding} if for any non-uniform QPT (resp.\ for any unbounded) adversary~$\cA$, it holds that
\begin{align*}
\left| \Pr\left[ 1\gets\cA\Bigl(1^\secp,\Tr_{\regR}\bigl( (Q_0 \proj{0} Q_0^\dagger)_{\regC,\regR} \bigr) \Bigr) \right]
- \Pr\left[ 1\gets\cA\Bigl(1^\secp,\Tr_{\regR}\bigl( (Q_1 \proj{0} Q_1^\dagger)_{\regC,\regR} \bigr) \Bigr) \right] \right| \leq \epsilon(\secp).
\end{align*}
We say that the scheme is \emph{computationally (resp.\ statistically) hiding} if $\eps$ is negligible.

\item (Binding)
The scheme is \emph{computationally (resp.\ statistically) $\delta$-binding} if for any polynomial-size register~$\regZ$ and for any non-uniform QPT (resp.\ for any unbounded) unitary~$U_{\regR,\regZ}$, it holds that
\begin{align*}
\left\|
    \bigl( (\bra{0}Q_1^\dagger)_{\regC,\regR} \otimes I_{\regZ} \bigr)
    \bigl( I_{\regC}\otimes U_{\regR,\regZ} \bigr)
    \bigl( (Q_0\ket{0})_{\regC,\regR} \otimes I_{\regZ} \bigr)
\right\|\le \delta(\secp).
\end{align*}
We say that the scheme is \emph{computationally (resp.\ statistically) binding} if $\delta$ is negligible.
\end{itemize}
\end{definition}

Note that \emph{statistical} $\eps$-hiding can also be defined in terms of the trace distance, as follows:
\begin{align}\label{eq:stat hiding via TD}
    \TD\Bigl( \Tr_{\regR}\bigl( (Q_0 \proj{0} Q_0^\dagger)_{\regC,\regR} \bigr), \Tr_{\regR}\bigl( (Q_1 \proj{0} Q_1^\dagger)_{\regC,\regR} \bigr) \Bigr) \leq \epsilon(\secp).
\end{align}

The following lemma is an immediate consequence of Uhlmann’s theorem.

\begin{lemma}\label{lem:com_to_EFI}
EFI pairs exist if and only if there exists a canonical quantum bit commitment scheme that is computationally hiding and statistically $(1-1/\poly(\secp))$-binding.%
\footnote{A very recent work~\cite{parallel_repetition} shows that canonical quantum bit commitment schemes that satisfy computational hiding and \emph{computational} $(1-1/\poly(\secp))$-binding are sufficient for constructing EFI pairs. We do not need this result.}
\end{lemma}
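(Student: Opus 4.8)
The plan is to establish both directions by moving between EFI pairs and canonical quantum bit commitments via the correspondence of~\cite{AC:Yan22}, then adjusting the quantitative parameters using Uhlmann's theorem. Recall that a canonical scheme $\{Q_0,Q_1\}$ induces a pair of states $\xi_b \coloneqq \Tr_{\regR}\bigl((Q_b\proj{0}Q_b^\dagger)_{\regC,\regR}\bigr)$, and conversely any EFI pair can be purified and the purification prepared by a QPT unitary (padding with ancillas so that both purifications live on the same $\regC,\regR$), yielding a canonical scheme. The efficient-generatability of the EFI pair corresponds exactly to $Q_0,Q_1$ being uniform QPT, and computational indistinguishability of $\xi_0,\xi_1$ is, by definition, computational hiding of the induced commitment. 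So the only nontrivial content is translating ``statistically far'' into ``statistically $(1-1/\poly)$-binding'' and back.

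For the direction ``EFI $\Rightarrow$ commitment'', I would take the canonical scheme induced by the purified EFI pair and argue statistical $(1-1/\poly)$-binding. By the defining equation of $\delta$-binding, the binding quantity equals $\max_{U_{\regR,\regZ}} \bigl\| (\bra{0}Q_1^\dagger)(I_{\regC}\otimes U_{\regR,\regZ})(Q_0\ket{0}) \bigr\|$, which by Uhlmann's theorem is exactly the fidelity $F(\xi_0,\xi_1)^{1/2}$ between the reduced states on $\regC$ (allowing a sufficiently large $\regZ$ to purify). Using the Fuchs--van~de~Graaf inequality $F(\xi_0,\xi_1)^{1/2} \le 1 - \TD(\xi_0,\xi_1)^2/2 \le \sqrt{1-\TD(\xi_0,\xi_1)^2}$, the assumption $\TD(\xi_0,\xi_1)\ge 1/\poly(\secp)$ gives $\delta \le 1 - 1/\poly(\secp)$, as required. (A minor point to check: a polynomial-size $\regZ$ suffices because the purifying system has dimension at most that of $\regC\otimes\regR$, which is $2^{\poly(\secp)}$.)

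For the converse ``commitment $\Rightarrow$ EFI'', I would simply output the induced pair $\xi_0,\xi_1$: efficient generatability and computational indistinguishability are immediate from uniform QPT and computational hiding, and statistical distance is bounded below by running the same Uhlmann/Fuchs--van~de~Graaf chain in reverse --- from $\delta \le 1-1/\poly$ and $\delta = F(\xi_0,\xi_1)^{1/2}$ we get $\TD(\xi_0,\xi_1) \ge 1 - F(\xi_0,\xi_1)^{1/2} \ge 1/\poly(\secp)$. The main obstacle, such as it is, is bookkeeping rather than conceptual: one must pad the two purifications to act on a common register pair $(\regC,\regR)$, verify that the register $\regC$ on which hiding/distance is measured is the same in both the commitment and EFI formulations, and confirm that the ``optimal attack unitary = Uhlmann unitary'' step only needs an ancilla $\regZ$ of polynomial size. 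None of these steps is deep, and together they give the claimed equivalence with no asymptotic loss.
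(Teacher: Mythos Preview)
Your proposal is correct and matches the paper's approach: the paper states the lemma as ``an immediate consequence of Uhlmann's theorem'' without further proof, and your argument is precisely the Uhlmann/Fuchs--van~de~Graaf unpacking of that sentence. (One cosmetic slip: in your chain $F^{1/2}\le 1-\TD^2/2 \le \sqrt{1-\TD^2}$ the second inequality points the wrong way, but you only use the first, which is the correct one.)
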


It is known that binding and hiding of canonical quantum bit commitments can be traded for each other~\cite{EC:CreLegSal01,AC:Yan22,EC:HhaMorYam23,STOC:GJMZ23}.
In particular, the following result was shown in~\cite{EC:HhaMorYam23} (see also~\cite{Combiner}).

\begin{lemma}[Flavor Conversion for Quantum Bit Commitments~\cite{EC:HhaMorYam23}]\label{lem:flavor_conversion}
If there exists a canonical quantum bit commitment scheme that is statistically $\epsilon$-hiding and computationally binding, then there also exists one that is computationally hiding and statistically $\sqrt{\epsilon}$-binding.
\end{lemma}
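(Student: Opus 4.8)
The goal is to convert a canonical quantum bit commitment $(Q_0,Q_1)$ on registers $(\regC,\regR)$ that is statistically $\epsilon$-hiding and computationally binding into one that is computationally hiding and statistically $\sqrt\epsilon$-binding; by \cref{def:canonical_com} this is just ``flipping the flavor''. Write $\ket{\phi_b}\coloneqq Q_b\ket{0\cdots0}_{\regC,\regR}$ and $\rho_b^{\regC}\coloneqq\Tr_{\regR}\proj{\phi_b}$, so the hypothesis gives $\TD(\rho_0^\regC,\rho_1^\regC)\le\epsilon$ and that no QPT $U_{\regR,\regZ}$ maps $\ket{\phi_0}$ to $\ket{\phi_1}$ acting on $\regR$. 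A bare swap of $\regC$ and $\regR$ does not work, since $\rho_0^\regR$ and $\rho_1^\regR$ need not be close; instead I would introduce a fresh qubit $\mathbf b$ and define $(\wtl{Q}_0,\wtl{Q}_1)$ with commitment register $\wtl\regC\coloneqq(\mathbf b,\regR)$ and reveal register $\wtl\regR\coloneqq\regC$, where $\wtl{Q}_c$ prepares
\[
\ket{\wtl\phi_c}\coloneqq\frac1{\sqrt2}\bigl(\ket{0}_{\mathbf b}\ket{\phi_0}_{\regC,\regR}+(-1)^c\ket{1}_{\mathbf b}\ket{\phi_1}_{\regC,\regR}\bigr).
\]
This is prepared by a QPT circuit (put $\mathbf b$ in the state $\tfrac1{\sqrt2}(\ket{0}+(-1)^c\ket{1})$, then apply the controlled unitary $\proj{0}_{\mathbf b}\otimes Q_0+\proj{1}_{\mathbf b}\otimes Q_1$), so $(\wtl{Q}_0,\wtl{Q}_1)$ is a legitimate canonical scheme.

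\textbf{Statistical binding.} Set $\sigma_c\coloneqq\Tr_{\wtl\regR}\proj{\wtl\phi_c}=\Tr_{\regC}\proj{\wtl\phi_c}$. As in the Uhlmann fact used to prove \cref{lem:com_to_EFI}, the optimal unbounded binding attack on $(\wtl{Q}_0,\wtl{Q}_1)$ achieves exactly $\sqrt{F(\sigma_0,\sigma_1)}$, and since $\ket{\wtl\phi_c}$ purifies $\sigma_c$ with purifying register $\regC$, Uhlmann's theorem gives $\sqrt{F(\sigma_0,\sigma_1)}=\max_{V_\regC}\lvert\bra{\wtl\phi_0}(I\otimes V_\regC)\ket{\wtl\phi_1}\rvert$ over unitaries $V_\regC$ on $\regC$. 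A one-line expansion (the $\mathbf b$ register kills the $0$–$1$ cross terms) gives $\bra{\wtl\phi_0}(I\otimes V_\regC)\ket{\wtl\phi_1}=\tfrac12\Tr\bigl(V_\regC(\rho_0^\regC-\rho_1^\regC)\bigr)$, hence $\sqrt{F(\sigma_0,\sigma_1)}=\tfrac12\lVert\rho_0^\regC-\rho_1^\regC\rVert_1=\TD(\rho_0^\regC,\rho_1^\regC)\le\epsilon\le\sqrt\epsilon$. So the new scheme is statistically $\sqrt\epsilon$-binding (in fact $\epsilon$-binding).

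\textbf{Computational hiding.} I would reduce to computational binding of $(Q_0,Q_1)$. Suppose a QPT distinguisher separates $\sigma_0$ from $\sigma_1$ with advantage $\gamma$; let $E$ be its ``output $1$'' POVM element on $(\mathbf b,\regR)$, and let $E_{01}\coloneqq\bra{0}_{\mathbf b}E\ket{1}_{\mathbf b}$, a contraction on $\regR$. The same cross-term-free expansion gives $\gamma=2\lvert\mathrm{Re}\,\bra{\phi_0}(E_{01}\otimes I_\regC)\ket{\phi_1}\rvert$, so $\lvert\bra{\phi_1}(E_{01}^{\dagger}\otimes I_\regC)\ket{\phi_0}\rvert\ge\gamma/2$. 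Purify the distinguisher to a QPT unitary $\wtl D$ on $(\mathbf b,\regR,\mathbf{anc})$ with output qubit $\mathbf{out}$; then this overlap equals $\langle\psi_0|\psi_1\rangle$ with $\ket{\psi_i}\coloneqq\proj{1}_{\mathbf{out}}\,\wtl D\,\ket{a_i}\ket{\phi_i}$, $\ket{a_0}=\ket{0\cdots0}$, $\ket{a_1}=\ket{1}_{\mathbf b}\ket{0\cdots0}_{\mathbf{anc}}$. Now replace the post-selecting projector by the reflection $R\coloneqq I-2\proj{1}_{\mathbf{out}}$, a QPT \emph{unitary}: because $\langle a_0|a_1\rangle=0$, the efficient unitary $U\coloneqq\wtl D^{\dagger}R\,\wtl D$ (which acts trivially on $\regC$) satisfies $\lvert\bra{\phi_1}\bra{a_1}(U\otimes I_\regC)\ket{\phi_0}\ket{a_0}\rvert=2\lvert\langle\psi_0|\psi_1\rangle\rvert\ge\gamma$. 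Reading $(\mathbf b,\mathbf{anc})$, initialized to $\ket{a_0}=\ket{0\cdots0}$, as the auxiliary register, $U$ is a QPT binding attack on $(Q_0,Q_1)$ with success probability $\ge\gamma$, so $\gamma$ must be negligible; hence the new scheme is computationally hiding.

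\textbf{Where the difficulty lies.} I expect the genuinely clever step to be designing the construction itself: one needs a single efficiently-preparable state family for which the new commitment-register fidelity is \emph{exactly} $\TD(\rho_0^\regC,\rho_1^\regC)$ (turning old statistical hiding into new statistical binding) while the two states stay efficiently indistinguishable (turning old computational binding into new computational hiding). Once the superposition-with-control construction is fixed, the binding bound is a one-line Uhlmann computation and the hiding reduction hinges only on the ``reflection-for-projector'' gadget above, both of which are routine. A cosmetic point: the construction actually delivers $\epsilon$-binding, which implies the claimed $\sqrt\epsilon$-binding since we may assume $\epsilon\le1$.
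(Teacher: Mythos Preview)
The paper does not give its own proof of \cref{lem:flavor_conversion}; it is quoted as a known result from \cite{EC:HhaMorYam23}. Your proposal is correct and is essentially the construction and argument of that reference: place a fresh phase-controlling qubit $\mathbf b$ into the new commitment register together with the old reveal register $\regR$, and let the old commitment register $\regC$ become the new reveal register. Your Uhlmann computation showing that the optimal unbounded binding attack on the new scheme achieves exactly $\sqrt{F(\sigma_0,\sigma_1)}=\TD(\rho_0^{\regC},\rho_1^{\regC})\le\epsilon$ is precisely the point of the construction, and the reflection gadget $U=\wtl D^\dagger(I-2\proj{1}_{\mathbf{out}})\wtl D$ in your hiding reduction is a clean way to convert the distinguisher's POVM into a QPT unitary acting only on $(\regR,\regZ)$, as required by \cref{def:canonical_com}. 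As you correctly observe at the end, the construction in fact delivers statistical $\epsilon$-binding, stronger than the $\sqrt\epsilon$ stated; the weaker bound is all the paper needs for \cref{cor:com_to_EFI}.
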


Combining \cref{lem:com_to_EFI,lem:flavor_conversion}, we obtain the following corollary.

\begin{corollary}\label{cor:com_to_EFI}
If there exists a canonical quantum bit commitment scheme that is statistically $(1-1/\poly(\secp))$-hiding and computationally binding, then EFI pairs exist.
\end{corollary}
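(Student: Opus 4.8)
The plan is to chain the two lemmas already established, \cref{lem:flavor_conversion} and \cref{lem:com_to_EFI}, bridging them by one elementary observation. Suppose we are given a canonical quantum bit commitment scheme that is statistically $\epsilon$-hiding with $\epsilon(\secp) = 1 - 1/p(\secp)$ for some polynomial $p$, and computationally binding. This is exactly the hypothesis of \cref{lem:flavor_conversion}, so applying it directly yields a canonical quantum bit commitment scheme that is computationally hiding and statistically $\sqrt{\epsilon}$-binding.

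The only point to check is that $\sqrt{\epsilon}$ remains of the form $1 - 1/\poly(\secp)$. This follows from the elementary inequality $\sqrt{1-x} \le 1 - x/2$, valid for all $x \in [0,1]$: taking $x = 1/p(\secp)$ gives
\[
\sqrt{\epsilon(\secp)} = \sqrt{1 - 1/p(\secp)} \le 1 - \frac{1}{2p(\secp)},
\]
so the new scheme is statistically $(1 - 1/(2p(\secp)))$-binding, and $2p$ is again a polynomial.

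It then remains to invoke the ``if'' direction of \cref{lem:com_to_EFI}, which asserts that a canonical quantum bit commitment scheme that is computationally hiding and statistically $(1-1/\poly(\secp))$-binding implies the existence of EFI pairs. Applying it to the scheme produced in the previous step completes the argument.

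I do not anticipate any genuine obstacle here: the only mild subtlety is ensuring that the polynomial gap survives the square root in the flavor-conversion step, which the displayed inequality handles, together with the observation that \cref{lem:flavor_conversion} is stated for arbitrary $\epsilon$, so no further restriction on $\epsilon$ (such as being bounded away from $1$) is needed.
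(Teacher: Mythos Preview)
Your proposal is correct and matches the paper's approach exactly: the paper simply states that the corollary follows by combining \cref{lem:com_to_EFI} and \cref{lem:flavor_conversion}, and you have filled in the one detail the paper leaves implicit, namely that $\sqrt{1-1/p(\secp)} \le 1 - 1/(2p(\secp))$ so the polynomial gap survives the flavor conversion.
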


\section{Inefficiently-Verifiable One-Way State Generators (IV-OWSGs)}
\label{sec:def_IV-OWSGs}

In this section, we define IV-OWSGs and the notion of exponential security.
We first recall the definition of OWSGs.%
\footnote{The original OWSGs introduced in \cite{C:MorYam22} were defined to have pure state outputs, but this was later generalized to allow for mixed state outputs~\cite{cryptoeprint:2022/1336}.}

\begin{definition}[OWSGs~\cite{C:MorYam22,cryptoeprint:2022/1336}]
A \emph{one-way state generator (OWSG)} is a triple $(\KeyGen,\StateGen,\Ver)$ of uniform QPT algorithms with the following syntax:
\begin{itemize}
\item $\KeyGen(1^\secp)\to k$:
On input the security parameter~$\secp$, this algorithm outputs a key~$k \in \bit^\lambda$.
\item $\StateGen(k)\to\phi_k$:
On input~$k$, this algorithm outputs the (possibly mixed) quantum state $\phi_k$.
\item $\Ver(k',\phi)\to \{\top, \bot\}$:
On input a bit string $k'$ and a quantum state $\phi$, this algorithm outputs $\top$ or $\bot$.
\end{itemize}
We require the following two properties to hold.
\begin{itemize}
\item (Correctness)
There exists a negligible function $\negl$ such that:
\begin{align*}
   \Pr\left[ \top\gets\Ver(k,\phi_k) \;:\; k\gets\KeyGen(1^\secp),\ \phi_k\gets\StateGen(k) \right] \geq 1-\negl(\secp).
\end{align*}
\item (Security)
For any non-uniform QPT adversary~$\cA$ and any polynomial~$t=t(\lambda)$, there exists a negligible function $\negl$ such that
\begin{align*}
   \Pr\left[ \top\gets\Ver(k',\phi_k) \;:\; k\gets\KeyGen(1^\secp),\ \phi_k\gets\StateGen(k),\ k'\gets\cA(1^\secp,\phi_k^{\otimes t(\lambda)}) \right] \leq \negl(\secp).
\end{align*}
\end{itemize}
\end{definition}

Now we introduce a new variant of OWSGs where we allow $\Ver$ to be inefficient.

\begin{definition}[IV-OWSGs]
An \emph{inefficiently-verifiable one-way state generator (IV-OWSG)} is defined like a OWSG, except that the algorithm~$\Ver$ is allowed to be inefficient (need not be QPT).
\end{definition}

We will also consider exponential security as defined below.
The definition applies to both OWSGs and IV-OWSGs.

\begin{definition}[Exponential security of (IV-)OWSGs]
For a function $\delta \colon \mathbb N \to \mathbb R$,
we say that an OWSG or IV-OWSG is \emph{$\delta$-exponentially secure} if
\begin{align*}
   \Pr\left[ \top\gets\Ver(k',\phi_k) \;:\; k\gets\KeyGen(1^\secp),\ \phi_k\gets\StateGen(k),\ k'\gets\cA(1^\secp,\phi_k^{\otimes t(\lambda)}) \right] \leq 2^{-\delta(\lambda)}
\end{align*}
for any non-uniform QPT adversary $\cA$,  any polynomial $t=t(\secp)$, and a large enough $\secp$, possibly depending on the adversary.
\end{definition}

\section{EFI Pairs Imply IV-OWSGs}
\label{sec:EFI_IVOWSG}

In this section, we show that the existence of EFI pairs implies the existence of IV-OWSGs, unconditionally and without any loss.
For our proof, it will be useful to recall the notion of secretly-verifiable and statistically-invertible one-way state generators (SV-SI-OWSGs) as introduced in~\cite{cryptoeprint:2022/1336}.

\begin{definition}[SV-SI-OWSGs~\cite{cryptoeprint:2022/1336}]
A \emph{secretly-verifiable and statistically-invertible one-way state generator (SV-SI-OWSG)} consists of a pair $(\KeyGen,\StateGen)$ of uniform QPT algorithms with the following syntax.
\begin{itemize}
    \item 
    $\KeyGen(1^\secp)\to k:$
    On input the security parameter $\secp$, this algorithm outputs a key $k \in \bit^\lambda$.
    \item 
    $\StateGen(k)\to\phi_k:$
    On input $k$, this algorithm outputs a (possibly mixed) quantum state $\phi_k$.
\end{itemize}
We require the following two properties to hold:
\begin{itemize}
\item (Statistical Invertibility)
There exists a (not necessarily efficient) POVM~$\{\Pi_k\}_{k\in\bit^\secp}$ such that $\Tr(\Pi_k\phi_k)\ge1-\negl(\secp)$ and $\Tr(\Pi_{k'}\phi_k)\le\negl(\secp)$ for any two keys $k \neq k'$.
\item (Computational Non-Invertibility)
 For any non-uniform QPT adversary $\cA$ and any polynomial $t = t(\lambda)$,  there exists a negligible function $\negl$ such that 
\begin{align*}
   \Pr[k\gets\cA(1^\secp,\phi_k^{\otimes t(\lambda)}):k\gets\KeyGen(1^\secp),\phi_k\gets\StateGen(k)]\le\negl(\secp). 
\end{align*}
\end{itemize}
\end{definition}

The following result was proved in the same paper.

\begin{lemma}[{\cite[Theorem~7.8]{cryptoeprint:2022/1336}}]\label{lem:efi-to-sv-si-owsg}
If EFI pairs exist, then SV-SI-OWSGs exist.
\end{lemma}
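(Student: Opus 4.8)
The plan is to turn the EFI pair into near-perfectly-distinguishable-yet-computationally-hidden states, and then encode a $\secp$-bit key one bit per ``block''. Let $\{\xi_{\secp,0},\xi_{\secp,1}\}$ be an EFI pair, so $\TD(\xi_{\secp,0},\xi_{\secp,1})\ge\eps$ for some $\eps=1/\poly(\secp)$. By the Fuchs--van de Graaf inequalities together with multiplicativity of the fidelity, $F(\xi_{\secp,0}^{\otimes m},\xi_{\secp,1}^{\otimes m})=F(\xi_{\secp,0},\xi_{\secp,1})^m\le(1-\eps^2)^m$, so taking $m=m(\secp)=\Theta(\secp/\eps^2)=\poly(\secp)$ gives $\TD(\xi_{\secp,0}^{\otimes m},\xi_{\secp,1}^{\otimes m})\ge 1-2^{-\secp}$. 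I would then define the SV-SI-OWSG by letting $\KeyGen(1^\secp)$ output a uniform $k=(k_1,\dots,k_\secp)\in\bit^\secp$ and letting $\StateGen(k)\coloneqq\xi_{\secp,k_1}^{\otimes m}\otimes\cdots\otimes\xi_{\secp,k_\secp}^{\otimes m}$; both algorithms are QPT since the EFI pair is efficiently generatable.

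For \emph{statistical invertibility}, I would fix for each block $i$ an optimal two-outcome (Helstrom) POVM $\{M_0^{(i)},M_1^{(i)}\}$ distinguishing $\xi_{\secp,0}^{\otimes m}$ from $\xi_{\secp,1}^{\otimes m}$; the amplified trace distance yields $\Tr(M_b^{(i)}\xi_{\secp,b}^{\otimes m})\ge 1-2^{-\secp}$ and $\Tr(M_{1-b}^{(i)}\xi_{\secp,b}^{\otimes m})\le 2^{-\secp}$. Setting $\Pi_k\coloneqq M_{k_1}^{(1)}\otimes\cdots\otimes M_{k_\secp}^{(\secp)}$, the relation $M_0^{(i)}+M_1^{(i)}=I$ for each $i$ makes $\{\Pi_k\}_{k\in\bit^\secp}$ sum to the identity, hence a valid POVM, and the product structure immediately gives $\Tr(\Pi_k\phi_k)\ge(1-2^{-\secp})^\secp\ge 1-\negl(\secp)$ as well as $\Tr(\Pi_{k'}\phi_k)\le 2^{-\secp}=\negl(\secp)$ for any $k'\neq k$ (bounding by the single factor for a coordinate where $k'$ and $k$ differ).

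For \emph{computational non-invertibility}, suppose a non-uniform QPT $\cA$ outputs $k$ from $\phi_k^{\otimes t}=\bigotimes_i\xi_{\secp,k_i}^{\otimes mt}$ with probability $p_\secp$. I would run a hybrid argument over the $\secp$ blocks: in hybrid $j$, the first $j$ blocks handed to $\cA$ are replaced by copies of the \emph{fixed} state $\xi_{\secp,0}^{\otimes mt}$, sampled independently of the corresponding key bits. Passing from hybrid $j$ to hybrid $j+1$ alters only one block, so by computational indistinguishability of the EFI pair (preserved under $mt=\poly(\secp)$ copies via a standard copy-by-copy hybrid, with the reduction generating all the other blocks itself) the probability that $\cA$'s first $j{+}1$ output symbols match the first $j{+}1$ key bits changes by at most $\negl(\secp)$; and in hybrid $j+1$ the $(j{+}1)$-st key bit is information-theoretically hidden from $\cA$, so that probability is at most half the probability that $\cA$'s first $j$ symbols are correct in hybrid $j$. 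Composing these bounds down the chain gives $p_\secp\le 2^{-\secp}+\negl(\secp)$, hence $p_\secp$ is negligible, as required.

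The step I expect to be the main obstacle — or at least the one deserving real care — is precisely this last composition. A \emph{single} use of EFI indistinguishability only shows that each individual key bit is recovered with probability $\le 1/2+\negl(\secp)$, which is worthless on its own; the hybrids must be arranged so that the conditional success probability on bit $j{+}1$, given that the already-hybridized game has produced the first $j$ bits correctly, is $\le 1/2+\negl(\secp)$, and one must check that the events ``first $j$ bits correct'' in consecutive hybrids differ by only $\negl(\secp)$ so that the $\secp$ factors of $1/2$ genuinely multiply to $2^{-\secp}$ rather than collapsing to a constant. The remaining verifications (trace-distance amplification, the POVM identities) are routine.
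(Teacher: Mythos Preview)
The paper does not give its own proof of this lemma; it is quoted verbatim as Theorem~7.8 of~\cite{cryptoeprint:2022/1336} and used as a black box. Your proof sketch is correct and follows the natural construction (bitwise encoding of a uniform key via amplified EFI states), which is indeed how the cited result is proved. The statistical invertibility argument is clean and correct as written.

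One remark on the computational non-invertibility step: your telescoping recursion through the quantities ``probability the first~$j$ bits are correct in hybrid~$j$'' works, but is more intricate than necessary. A cleaner route is to compare the real game directly against hybrid~$\secp$ (all blocks replaced by~$\xi_{\secp,0}^{\otimes mt}$) using a single chain of~$\secp$ hybrids, each changing one block; adjacent hybrids differ by at most~$\negl(\secp)$ in the full success probability~$\Pr[\cA\text{ outputs }k]$ by EFI indistinguishability (with the reduction sampling~$k$, embedding the challenge in block~$j$ when~$k_j=1$, and checking whether~$\cA$'s output equals~$k$). In hybrid~$\secp$ the state is independent of~$k$, so the success probability is exactly~$2^{-\secp}$, giving~$p_\secp\le 2^{-\secp}+\secp\cdot\negl(\secp)$ immediately. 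This avoids the conditional-probability bookkeeping you flagged as delicate; the multiplicative factors of~$1/2$ never need to be composed explicitly because the final hybrid already gives~$2^{-\secp}$ in one shot.
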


We are now ready to state and prove the main theorem of this section.

\begin{theorem}\label{thm1}
If EFI pairs exist, then IV-OWSGs exist.    
\end{theorem}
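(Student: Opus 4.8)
The plan is to build an IV-OWSG directly out of an SV-SI-OWSG, which exists by \cref{lem:efi-to-sv-si-owsg}. Let $(\KeyGen,\StateGen)$ be an SV-SI-OWSG with associated (possibly inefficient) POVM $\{\Pi_k\}_{k\in\bit^\secp}$ satisfying $\Tr(\Pi_k\phi_k)\ge 1-\negl(\secp)$ and $\Tr(\Pi_{k'}\phi_k)\le\negl(\secp)$ for all $k\ne k'$. I would keep $\KeyGen$ and $\StateGen$ exactly as they are, and define $\Ver(k',\phi)$ to output $\bot$ whenever $k'\notin\bit^\secp$, and otherwise to apply the binary measurement $\{\Pi_{k'},I-\Pi_{k'}\}$ to $\phi$, outputting $\top$ if and only if the first outcome is obtained. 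Since $\KeyGen$ and $\StateGen$ are QPT and only $\Ver$ invokes the inefficient POVM, this is a syntactically valid IV-OWSG; this is precisely the point where the ``inefficiently verifiable'' relaxation is used.

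Correctness is immediate: $\Pr[\top\gets\Ver(k,\phi_k)]=\Tr(\Pi_k\phi_k)\ge 1-\negl(\secp)$ by statistical invertibility. For security, fix a non-uniform QPT adversary $\cA$ and a polynomial $t=t(\secp)$ against the IV-OWSG. Writing the experiment out, $\Pr[\top\gets\Ver(k',\phi_k)]=\mathbb{E}\bigl[\Tr(\Pi_{k'}\phi_k)\bigr]$, where the expectation is over $k\gets\KeyGen(1^\secp)$, the fresh copy $\phi_k$ fed to $\Ver$, and $k'\gets\cA(1^\secp,\phi_k^{\otimes t})$. Splitting according to whether $k'=k$, the $k'=k$ part contributes at most $\Pr[k'=k]$ (as each $\Tr(\Pi_{k'}\phi_k)\le 1$), while the $k'\ne k$ part contributes at most $\negl(\secp)$ by the second invertibility bound. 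Hence $\Pr[k'=k]\ge\Pr[\top\gets\Ver(k',\phi_k)]-\negl(\secp)$. The right-hand side probability is exactly what the algorithm $\cA$, viewed as an attacker against computational non-invertibility of the SV-SI-OWSG with the same number of copies $t$, achieves; that quantity is negligible, so the winning probability of $\cA$ against the IV-OWSG is negligible as well. This establishes security, and therefore the existence of an IV-OWSG.

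The step requiring the most care is purely the probabilistic accounting rather than a genuine obstacle: the copy of $\phi_k$ consumed inside $\Ver$ is independent of the $t$ copies $\cA$ observes, but since the invertibility guarantees $\Tr(\Pi_k\phi_k)\ge 1-\negl$ and $\Tr(\Pi_{k'}\phi_k)\le\negl$ ($k'\ne k$) are statements about the state $\phi_k$ itself, the above split is valid irrespective of any correlation between $k'$ and the measured copy. One should also note that $\cA$ may output a string of incorrect length, which is why $\Ver$ rejects such $k'$ outright, and that for each $k'\in\bit^\secp$ the operator $\Pi_{k'}$ is a legitimate POVM element, so $\{\Pi_{k'},I-\Pi_{k'}\}$ is a well-defined binary measurement.
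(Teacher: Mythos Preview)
Your proof is correct and follows essentially the same route as the paper: reduce to SV-SI-OWSGs via \cref{lem:efi-to-sv-si-owsg}, keep $\KeyGen$ and $\StateGen$, and define the (inefficient) verifier using the POVM $\{\Pi_k\}_k$; correctness is statistical invertibility, and security splits the acceptance probability into the $k'=k$ and $k'\ne k$ contributions, bounding the former by computational non-invertibility and the latter by $\negl$. The only cosmetic difference is that the paper's $\Ver$ applies the full POVM $\{\Pi_k\}_k$ and checks whether the outcome equals $k'$, whereas you apply the binary measurement $\{\Pi_{k'},I-\Pi_{k'}\}$; since both give acceptance probability $\Tr(\Pi_{k'}\phi_k)$, the analyses coincide. (One wording slip: in your security paragraph you write ``right-hand side'' where you mean the left-hand side $\Pr[k'=k]$, which is the quantity governed by computational non-invertibility.)
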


\begin{proof}
In view of \cref{lem:efi-to-sv-si-owsg} it suffices to show the existence of an IV-OWSG from a SV-SI-OWSG.
Let us thus assume that a SV-SI-OWSG $(\mathsf{SVSI}.\KeyGen,\allowbreak\mathsf{SVSI}.\StateGen)$ exists.
We construct an IV-OWSG $(\KeyGen,\StateGen,\Ver)$ as follows:
\begin{itemize}
\item
$\KeyGen(1^\secp)\to k:$
Run $k\gets\mathsf{SVSI}.\KeyGen(1^\secp)$ and output $k$.
\item 
$\StateGen(k)\to\phi_k:$
Run $\phi_k\gets\mathsf{SVSI}.\StateGen(k)$ and output $\phi_k$.
\item 
$\Ver(k',\phi)\to\top/\bot:$
Measure $\phi$ with the POVM $\{\Pi_k\}_{k\in\bit^\secp}$ that exists by the statistical invertibility.
If the outcome is~$k'$ then output~$\top$.
Otherwise, output $\bot$.
\end{itemize}
Correctness follows as an immediate consequence of the statistical invertibility of the SV-SI-OWSG.

We now prove security.
Let $\cA$ be an arbitrary non-uniform QPT algorithm and $t=t(\lambda)$ a polynomial.
Then it holds that,
\begin{align*}
&\Pr[\top\gets\Ver(k',\phi_k):k\gets\KeyGen(1^\secp),\phi_k\gets\StateGen(k),k'\gets\cA(1^\secp,\phi_k^{\otimes t})]    \\
&=\sum_k\Pr[k\gets\KeyGen(1^\secp)]\sum_{k'} \Pr[k'\gets\cA(1^\secp,\phi_k^{\otimes t})]\cdot\mbox{Tr}(\Pi_{k'}\phi_k)\\
&=\sum_k\Pr[k\gets\KeyGen(1^\secp)]\Pr[k\gets\cA(1^\secp,\phi_k^{\otimes t})]\cdot\mbox{Tr}(\Pi_k\phi_k)\\
&\quad+\sum_k\Pr[k\gets\KeyGen(1^\secp)]\sum_{k'\neq k}\Pr[\alpha\gets\cA(1^\secp,\phi_k^{\otimes t})]\cdot\mbox{Tr}(\Pi_{k'}\phi_k)\\
&\le\sum_k\Pr[k\gets\KeyGen(1^\secp)]\Pr[k\gets\cA(1^\secp,\phi_k^{\otimes t})]\\
&\quad+\sum_k\Pr[k\gets\KeyGen(1^\secp)]\sum_{\alpha\neq k}\Pr[\alpha\gets\cA(1^\secp,\phi_k^{\otimes t})]\cdot\negl(\secp)\\
&\le\negl(\secp),
\end{align*}
where the first inequality follows by upper bounding $\mbox{Tr}(\Pi_k\phi_k) \leq 1$ and using the statistical invertibility of the SV-SI-OWSG, and the second inequality uses the computational non-invertibility of the SV-SI-OWSG.
\end{proof}

\section{Exponentially-Secure IV-OWSGs Imply EFI Pairs}
\label{sec:exp_IVOWSG_EFI}
In this section, we show our main result, namely, the construction of EFI pairs from exponentially-secure IV-OWSGs.
Before stating our main theorem, we first prove a useful lemma.

\begin{lemma}\label{lem:1}
Let $(\KeyGen,\StateGen,\Ver)$ be a $\delta$-exponentially-secure IV-OWSG for some~$\delta=\delta(\secp)$.
Let~$p=p(\secp)$ be an arbitrary polynomial, and define for any~$\secp$ and for any~$k\in\bit^\secp$,
\begin{align*}
G_k\coloneqq G_k(\secp) \coloneqq \left\{k' \in \bit^\secp \;:\; \Pr[\top\gets\Ver(k',\phi_k)]\ge 1-\frac{1}{p(\secp)}\right\}.
\end{align*}
Suppose that $r=r(\secp)$ is a function that satisfies $\delta(\secp) + r(\secp) \geq C \secp$ for some constant~$C>1$.
Then there exists a negligible function~$\negl(\lambda)$ such that the following holds:
\begin{align*}
    \Pr\left[1\le \lvert G_k\rvert \le 2^r \;:\; k\gets\KeyGen(1^\secp) \right]
\geq \Pr\left[ k \in G_k \wedge \lvert G_k\rvert \le 2^r \;:\; k\gets\KeyGen(1^\secp) \right]
\geq 1 - \negl(\lambda),
\end{align*}
\end{lemma}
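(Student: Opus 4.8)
The first inequality is trivial: if $k \in G_k$ then $\lvert G_k\rvert \ge 1$, so the event $\{k \in G_k \wedge \lvert G_k\rvert \le 2^r\}$ is contained in $\{1 \le \lvert G_k\rvert \le 2^r\}$. The content is therefore the second inequality, which splits into two bad events to rule out: (i) $k \notin G_k$, and (ii) $\lvert G_k\rvert > 2^r$.

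For event~(i), I would invoke correctness of the IV-OWSG together with a Markov argument. Correctness says $\Pr[\top \gets \Ver(k,\phi_k)] \ge 1 - \negl(\secp)$ on average over $k \gets \KeyGen(1^\secp)$ (and over $\phi_k \gets \StateGen(k)$). Hence the expected value of $1 - \Pr[\top \gets \Ver(k,\phi_k)]$ is negligible, and by Markov's inequality the probability (over $k$) that $\Pr[\top\gets\Ver(k,\phi_k)] < 1 - 1/p(\secp)$ is at most $p(\secp)\cdot\negl(\secp) = \negl(\secp)$, using that $p$ is a fixed polynomial. That is exactly $\Pr[k \notin G_k] \le \negl(\secp)$.

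For event~(ii), the plan is a reduction to $\delta$-exponential security: if $\lvert G_k\rvert$ is too large, then a trivial adversary that just guesses a uniformly random string $k' \gets \bit^\secp$ succeeds with too-high probability. Concretely, consider the adversary $\cA$ that on input $1^\secp$ (and any number of copies, which it ignores) outputs a uniform $k' \in \bit^\secp$. Its success probability is
\begin{align*}
\Pr[\top\gets\Ver(k',\phi_k)] = \Exp_{k}\left[ \frac{1}{2^\secp}\sum_{k'\in\bit^\secp} \Pr[\top\gets\Ver(k',\phi_k)] \right] \ge \Exp_{k}\left[ \frac{\lvert G_k\rvert}{2^\secp}\cdot\left(1 - \frac{1}{p(\secp)}\right) \right],
\end{align*}
since every $k'\in G_k$ contributes at least $1 - 1/p(\secp)$ to the inner sum. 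If $\Pr[\lvert G_k\rvert > 2^r] \ge \eta(\secp)$ for some non-negligible $\eta$, then the right-hand side is at least $\tfrac12 \cdot \eta(\secp) \cdot 2^{r}/2^\secp = \tfrac12\,\eta(\secp)\, 2^{r-\secp}$ for large $\secp$. By $\delta$-exponential security this must be at most $2^{-\delta(\secp)}$ for large $\secp$, giving $\eta(\secp) \le 2\cdot 2^{\secp - r - \delta(\secp)} \le 2\cdot 2^{-(C-1)\secp}$ using the hypothesis $\delta(\secp) + r(\secp) \ge C\secp$. Since $C > 1$, this bound is negligible, contradicting non-negligibility of $\eta$; hence $\Pr[\lvert G_k\rvert > 2^r] \le \negl(\secp)$. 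A union bound over events~(i) and~(ii) then yields the claim.

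The main subtlety to be careful about is event~(ii): exponential security is an asymptotic statement ("for large enough $\secp$, possibly depending on the adversary"), so I would phrase the argument as: suppose for contradiction that $\Pr[\lvert G_k\rvert > 2^r]$ is non-negligible, extract an infinite sequence of $\secp$'s witnessing this, and derive that the uniform-guessing adversary violates $2^{-\delta}$-security on that sequence. One should also note that the adversary genuinely needs no copies of $\phi_k$, so there is no issue with the polynomial $t$; and that $\Ver$ being inefficient is irrelevant here since the success probability is defined via $\Ver$ regardless of its efficiency. I do not expect any real obstacle beyond bookkeeping the quantifiers correctly.
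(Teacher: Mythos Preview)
Your proposal is correct and follows essentially the same approach as the paper: correctness plus Markov handles $k\notin G_k$, and the trivial uniform-guessing adversary combined with $\delta$-exponential security bounds $\Pr[\lvert G_k\rvert>2^r]$ by $O(2^{-(C-1)\secp})$. The paper's write-up is slightly more direct (it defines $T=\{k:|G_k|>2^r\}$ and unwinds the success-probability inequality without a contrapositive), but the ideas and steps are the same.
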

\begin{proof}
The correctness of the IV-OWSG implies that
\begin{align*}
    \Pr\left[ \lvert G_k\rvert < 1 : k\gets\KeyGen(1^\secp) \right]
\leq 1- \Pr\left[ k \in G_k : k\gets\KeyGen(1^\secp) \right]
\leq \negl(\secp).
\end{align*}
It remains to show that the upper bound $\lvert G_k \rvert \leq 2^r$ holds for all but a negligible fraction of~$k$ as well.
To this end we consider the following trivial attack against the IV-OWSG:\footnote{It works even for $t=0$.}
On input $\phi_k^{\otimes t}$, ignore the state, sample $k' \gets \bit^\secp$, and output $k'$.
By the $\delta$-exponential security of the IV-OWSG, the winning probability of this attack is at most
\begin{align*}
  \Pr\left[ \top\gets\Ver(k',\phi_k) \;:\; k\gets\KeyGen(1^\secp),\ \phi_k\gets\StateGen(k),\ k'\gets \bit^\secp \right]
\leq 2^{-\delta(\secp)}.
\end{align*}
In other words:
\begin{align}\label{eq:trivial succ}
   \sum_{k,k'} \Pr[k] 2^{-\secp} \Pr[ \top\gets\Ver(k',\phi_k) ] \leq 2^{-\delta(\secp)},
\end{align}
where $\Pr[k]\coloneqq\Pr[k\gets\KeyGen(1^\secp)]$.
Let us define the set of keys for which the desired bound is not satisfied
\begin{align*}
  T \coloneqq T(\secp) \coloneqq \left\{k \in \bit^\secp \;:\; \abs{G_k} > 2^{r(\secp)} \right\}.
\end{align*}
Then we can rewrite \cref{eq:trivial succ} as follows:
\begin{align*}
2^{-{\delta(\secp)}}
&\ge\sum_{k,k'}\Pr[k]2^{-\secp} \Pr[\top\gets\Ver(k',\phi_k)]\\
&\ge\sum_{k\in T}\Pr[k] \sum_{k'\in G_k}2^{-\secp}\Pr[\top\gets\Ver(k',\phi_k)]\\
&\ge\sum_{k\in T}\Pr[k] \ \lvert G_k \rvert \ 2^{-\secp} \left( 1 - \frac 1 {p(\secp)} \right)\\
&\ge\sum_{k\in T}\Pr[k] \ 2^{r(\secp)} \ 2^{-\secp} \left( 1 - \frac 1 {p(\secp)} \right).
\end{align*}
Rearranging, we obtain that
\begin{align*}
    \Pr\left[ \lvert G_k\rvert > 2^{r(\secp)} : k\gets\KeyGen(1^\secp) \right]
= \sum_{k\in T}\Pr[k],
\end{align*}
which is negligible since $\delta(\secp) + r(\secp) \geq C \secp$, for some $C>1$.
\end{proof}

We are now ready to show the main theorem of this section.

\begin{theorem}\label{thm:EFI_from_IV-OWSG}
For any constant~$D>0$ the following holds.
If $\delta$-exponentially secure IV-OWSGs exist
with $\delta(\secp)\ge (0.5+D)\secp$,
then EFI pairs exist.
\end{theorem}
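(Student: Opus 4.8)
The plan is to realize the proof outline sketched in the introduction, namely to construct a canonical quantum bit commitment scheme from the IV-OWSG and then invoke \cref{cor:com_to_EFI}. Concretely, I would set up a family of pairwise-independent hash functions $\cH=\{h:\bit^\secp\to\bit^{r}\}$ where the output length $r=r(\secp)$ is chosen so that $\delta(\secp)+r(\secp)$ sits in the regime $(0.5+D)\secp + r \ge C\secp$ demanded by \cref{lem:1} for some $C>1$, while at the same time $r$ is large enough that $2^{r}$ dominates the (polynomially bounded by \cref{lem:1}) size $|G_k|$ of the ``good-key'' set, so that a random fiber $h^{-1}(h(k))$ isolates $k$ within $G_k$ with probability $\ge 1/2$. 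Given $D>0$, choosing $r(\secp)=(0.5-D/2)\secp$ (say) makes $\delta+r\ge(1+D/2)\secp$ work for \cref{lem:1}, while $2^{r}$ is super-polynomial so the isolation event has probability $1-\negl$ conditioned on $|G_k|$ being polynomial. The polynomial $t=t(\secp)$ and the accuracy parameters for shadow tomography are then fixed in terms of $p$, $M\le 2^\secp$, and $d$ via \cref{lem:ST}.

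Next I would define $Q_0,Q_1$ as the unitaries producing the states $\ket{\psi_0},\ket{\psi_1}$ from the outline (with $\regC=(\regC_1,\regC_2)$, $\regR=(\regR_1,\regR_2,\regR_3)$), verify these are uniform QPT using the WLOG forms of $\KeyGen$ and $\StateGen$, and then establish the two commitment properties. For \textbf{statistical $(1-1/\poly)$-hiding}: build an inefficient unitary $V$ on $\regR$ implementing coherently the procedure ``run shadow tomography on the $t$ copies $\phi_k^{\otimes t}$ in $\regR_2$ against the $M\le 2^\secp$ POVMs $\{\Ver(k',\cdot)\}_{k'\in\bit^\secp}$ to estimate $\Pr[\top\gets\Ver(k',\phi_k)]$ within $\varepsilon<1/(2p)$; form the list of $k'$ whose estimate exceeds $1-1/p-\varepsilon$; within that list keep the unique $k^*$ with $h(k^*)=h(k)$, writing $k^*$ into $\regR_3$, else write $\bot$.'' By \cref{lem:1} the list contains $k$ and has size $\le 2^r$ except with negligible probability, and by pairwise independence $h$ isolates $k$ in this list with probability $\ge 1/2-\negl$; hence $V$ maps $\ket{\psi_0}$ to a state with overlap $\ge 1/2-\negl(\secp)$ with $\ket{\psi_1}$. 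Uhlmann's theorem then gives $\TD(\Tr_\regR\proj{\psi_0},\Tr_\regR\proj{\psi_1})\le 1-1/\poly(\secp)$, i.e.\ statistical $(1-1/\poly)$-hiding (via \cref{eq:stat hiding via TD}). For \textbf{computational binding}: a QPT unitary $U_{\regR,\regZ}$ achieving $\|((\bra{0}Q_1^\dagger)\otimes I)(I\otimes U)((Q_0\ket{0})\otimes I)\|$ non-negligible would, after measuring $\regR_1$ to obtain $(h,h(k))$ and tracing out $\regC_2$, yield a QPT procedure that recovers $k$ from $(h,h(k),\phi_k^{\otimes t})$ with non-negligible probability; guessing $h(k)\in\bit^r$ uniformly (success $2^{-r}$) turns this into a QPT attack on the IV-OWSG succeeding with probability $\ge 2^{-r}\cdot\nonnegl(\secp)$, contradicting $\delta$-exponential security precisely because we arranged $\delta(\secp)>r(\secp)$ with a linear slack, so $2^{-r}\cdot\nonnegl \gg 2^{-\delta}$. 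This is where the constant $D>0$ and the $(0.5+D)\secp$ threshold are used: they buy simultaneously the slack $\delta>r$ for binding and the slack $\delta+r\ge C\secp$ for \cref{lem:1}.

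The main obstacle I anticipate is making the statistical-hiding argument fully rigorous at the level of states rather than classical probabilities: shadow tomography is stated for $t$ copies of a fixed $\rho$ and produces a randomized classical output, so I must (i) dilate it to a unitary $V$ acting coherently on the purifications $\ket{\Phi_k^{\otimes t}}$ held jointly in $\regR_2,\regC_2$, being careful that $V$ acts only on $\regR$ while the $\regC_2$-halves of the purifications are untouched, and (ii) control the residual garbage registers of $V$ so that the post-$V$ state genuinely has large fidelity with $\ket{\psi_1}$ and not merely large ``diagonal'' overlap. The standard move is to observe $\ket{\psi_0}$ and $\ket{\psi_1}$ differ only in register $\regR_3$ (which is $\ket{0\cdots0}$ versus $\ket{k}$), and that a coherent, reversible computation of $k\mapsto$(write $k$ into $\regR_3$) from the data in $\regR_1,\regR_2$ has, by the analysis above, success amplitude $\ge\sqrt{1/2-\negl}$; invoking a ``coherent-implementation-of-a-good-classical-algorithm'' lemma (uncomputing the shadow-tomography workspace back into a fixed $\ket{0}$ state, which costs nothing in the norm estimate) then yields the fidelity bound, and Uhlmann closes the argument. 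I would also need to double-check that the polynomial $t$ output by \cref{lem:ST} with $M=2^\secp$, $d=2^{O(\secp)}$, $\omega=\negl$, $\varepsilon=1/(2p)$ is indeed $\poly(\secp)$ — it is, since $\log^4 M$, $\log d$, $\log(1/\omega)$, $1/\varepsilon^4$ are all polynomial in $\secp$ — so that $Q_0,Q_1$ remain efficient.
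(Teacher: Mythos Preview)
Your approach is the same as the paper's: build the commitment states $\ket{\psi_b}$ exactly as in the outline, prove computational binding by guessing $h(k)$ and reducing to the exponential security of the IV-OWSG, and prove statistical hiding by coherently running shadow tomography plus hash-isolation to recover $k$ into~$\regR_3$, then invoking Uhlmann. The paper resolves your ``garbage register'' worry just as you suggest, by Naimark-dilating the recovery procedure to a unitary and conjugating a CNOT by it, so that the workspace is automatically uncomputed.

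There is, however, a concrete misstep in your parameter choice. You write that $|G_k|$ is ``polynomially bounded by \cref{lem:1}''; it is not. \Cref{lem:1} only asserts that $|G_k|\le 2^{r}$ whenever $\delta+r\ge C\secp$ for some $C>1$, and this bound is exponential. Since you feed the \emph{same} $r$ into both \cref{lem:1} and the hash range $|\cY|=2^r$, all you get is $|G_k|\le|\cY|$, so the isolation probability $1-(|G_k|-1)/|\cY|$ can be essentially zero, not $1-\negl(\secp)$ or even $\ge 1/2$. The paper avoids this by taking $|\cY|=2\lfloor2^{r}\rfloor$ (one extra bit of output), which guarantees isolation with probability $\ge 1/2$, and it sets $r(\secp)=(0.5+D/2)\secp$ rather than your $(0.5-D/2)\secp$; with this choice both $\delta+r\ge(1+\tfrac{3D}{2})\secp$ (for \cref{lem:1}) and $\delta-r\ge (D/2)\secp$ (the linear slack needed for binding) hold. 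Your value $r=(0.5-D/2)\secp$ could also be made to work, but only if you apply \cref{lem:1} with a strictly smaller parameter, say $r'=(0.5-\tfrac{3D}{4})\secp$, so that $|G_k|\le 2^{r'}\ll 2^{r}=|\cY|$; as written, the hiding argument has a gap at exactly this point.
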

\begin{proof}
Assume that an IV-OWSG $(\KeyGen,\StateGen,\Ver)$ exists.
By \cref{cor:com_to_EFI}, it suffices to show that there exists a canonical quantum bit commitment scheme that is statistically $(1-1/\poly(\secp))$-hiding and computationally binding.
To prepare the construction of the bit commitment scheme, we make the following structural observations:
\begin{itemize}
\item
Without loss of generality, we may assume that the $\KeyGen$ algorithm applies a QPT unitary to generate a superposition~$\sum_{k}\sqrt{\Pr[k]} \ket{k} \ket{\mathrm{junk}_k}$, measures the first register, and outputs the result.
\item Similarly, we can assume that the $\StateGen(k)$ algorithm applies a QPT unitary to generate a pure state~$\ket{\Phi_k}_{\regA,\regB}$ and outputs the register~$\regA$, which is in state~$\phi_k=\Tr_{\regB}(\proj{\Phi_k}_{\regA,\regB})$.
\end{itemize}
Let $\cH\coloneqq\{h:\cX\to\cY\}$ be a family of pairwise independent hash functions such that~$\cX\coloneqq\bit^\secp$ and~$\cY\coloneqq\{1,\dots, 2\lfloor 2^{r(\lambda)}\rfloor\}$, where~$r(\secp)\coloneqq(0.5+\frac{D}{2})\secp$.
Let us also denote by~$t(\lambda)$ a polynomial that will be chosen later in the proof.
Then we can define the following two states, which we shall think of being the commitments of $0$ and $1$, respectively:
\begin{align*}
\ket{\psi_{0}}_{\regR,\regC}&\coloneqq
\sum_{k}
\sum_{h\in\cH}
\sqrt{\frac{\Pr[k]}{|\cH|}}
\ket{k,\mathrm{junk}_k,h}_{\regC_1}\ket{h,h(k)}_{\regR_1}\ket{\Phi_k^{\otimes t(\lambda)}}_{\regC_2,\regR_2}\ket{0...0}_{\regR_3}\\
\ket{\psi_1}_{\regR,\regC}&\coloneqq
\sum_{k}
\sum_{h\in\cH}
\sqrt{\frac{\Pr[k]}{|\cH|}}
\ket{k,\mathrm{junk}_k,h}_{\regC_1}\ket{h,h(k)}_{\regR_1}\ket{\Phi_k^{\otimes t(\lambda)}}_{\regC_2,\regR_2}\ket{k}_{\regR_3},
\end{align*}
where $\Pr[k]\coloneqq\Pr[k\gets\KeyGen(1^\secp)]$.
We denote by~$\regC\coloneqq(\regC_1,\regC_2)$ the commitment register and by~$\regR\coloneqq(\regR_1,\regR_2,\regR_3)$ the reveal register.
Note that $\regC_2 \coloneqq (\regB_1,\dots,\regB_t)$ and $\regR_2 \coloneqq (\regA_1,\dots,\regA_t)$, where the $j$-th copy of~$\ket{\Phi_k}$ lives on registers~$\regA_j,\regB_j$.
By construction, both~$\ket{\psi_0}$ and~$\ket{\psi_1}$ can be generated by QPT unitaries.
In the following we show that the scheme is statistically $(1-1/\poly(\secp))$-hiding and computationally binding.

\paragraph{Computational binding.}
Towards a contradiction, we assume that our construction is not computationally binding.
This means that there exists a polynomial $q=q(\lambda)$, an advice quantum state~$\ket{\tau}_\regZ$ on a polynomially-sized register~$\regZ$,  and a non-uniform QPT unitary~$U$ acting on~$\regR$ and~$\regZ$ such that
\begin{align*}
\|\bra{\psi_1}_{\regC,\regR}(I_\regC\otimes U_{\regR,\regZ})(\ket{\psi_0}_{\regC,\regR} \otimes \ket{\tau}_\regZ)\|^2\ge\frac{1}{q(\secp)}
\end{align*}
for infinitely many~$\secp$.
This means that, abbreviating $t=t(\secp)$,
\begin{align}
\frac{1}{q(\secp)}
&\le\left\|
\sum_{k,h}\frac{\Pr[k]}{|\cH|}
\bra{h,h(k)}_{\regR_1}\bra{\Phi_k^{\otimes t}}_{\regC_2,\regR_2}\bra{k}_{\regR_3}
(I_{\regC_2}\otimes U_{\regR,\regZ})
\ket{h,h(k)}_{\regR_1}\ket{\Phi_k^{\otimes t}}_{\regC_2,\regR_2}\ket{0...0}_{\regR_3}\ket{\tau}_\regZ\right\|^2\nonumber\\
&\le\left(\sum_{k,h}\frac{\Pr[k]}{|\cH|}
\left\|\bra{h,h(k)}_{\regR_1}\bra{\Phi_k^{\otimes t}}_{\regC_2,\regR_2}\bra{k}_{\regR_3}
(I_{\regC_2}\otimes U_{\regR,\regZ})
\ket{h,h(k)}_{\regR_1}\ket{\Phi_k^{\otimes t}}_{\regC_2,\regR_2}\ket{0...0}_{\regR_3}\ket{\tau}_\regZ\right\|\right)^2\nonumber\\
&\le\sum_{k,h}\frac{\Pr[k]}{|\cH|}
\left\|\bra{h,h(k)}_{\regR_1}\bra{\Phi_k^{\otimes t}}_{\regC_2,\regR_2}\bra{k}_{\regR_3}
(I_{\regC_2}\otimes U_{\regR,\regZ})
\ket{h,h(k)}_{\regR_1}\ket{\Phi_k^{\otimes t}}_{\regC_2,\regR_2}\ket{0...0}_{\regR_3}\ket{\tau}_\regZ\right\|^2\nonumber\\
&\le\sum_{k,h}\frac{\Pr[k]}{|\cH|}
\left\|\bra{k}_{\regR_3}
(I_{\regC_2}\otimes U_{\regR,\regZ})
\ket{h,h(k)}_{\regR_1}\ket{\Phi_k^{\otimes t}}_{\regC_2,\regR_2}\ket{0...0}_{\regR_3}\ket{\tau}_\regZ\right\|^2\nonumber\\
&\le\sum_{k\in G,h}\frac{\Pr[k]}{|\cH|}
\left\|\bra{k}_{\regR_3}
(I_{\regC_2}\otimes U_{\regR,\regZ})
\ket{h,h(k)}_{\regR_1}\ket{\Phi_k^{\otimes t}}_{\regC_2,\regR_2}\ket{0...0}_{\regR_3}\ket{\tau}_\regZ\right\|^2\nonumber\\
&+\sum_{k\notin G,h}\frac{\Pr[k]}{|\cH|}
\left\|\bra{k}_{\regR_3}
(I_{\regC_2}\otimes U_{\regR,\regZ})
\ket{h,h(k)}_{\regR_1}\ket{\Phi_k^{\otimes t}}_{\regC_2,\regR_2}\ket{0...0}_{\regR_3}\ket{\tau}_\regZ\right\|^2\nonumber\\
&\le\sum_{k\in G,h}\frac{\Pr[k]}{|\cH|}
\left\|\bra{k}_{\regR_3}
(I_{\regC_2}\otimes U_{\regR,\regZ})
\ket{h,h(k)}_{\regR_1}\ket{\Phi_k^{\otimes t}}_{\regC_2,\regR_2}\ket{0...0}_{\regR_3}\ket{\tau}_\regZ\right\|^2+\negl(\secp).\label{from_nobinding}
\end{align}
Here, in the second inequality we have used the triangle inequality, and in the third inequality we have used Jensen's inequality.
$G$ is a set defined as
\begin{align}
G\coloneqq\left\{k\in\bit^\secp:\Pr[\top\gets\Ver(k,\phi_k)]\ge\frac{1}{2}\right\}.
\end{align}
Due to the correctness of the IV-OWSG, we have $\sum_{k\not\in G}\Pr[k]\le\negl(\secp)$.

From such $\ket{\tau}$ and $U$, we can construct a non-uniform QPT adversary~$\cA$ that breaks the security of the IV-OWSG as follows:
\begin{enumerate}
    \item
    The algorithm gets as input the $\regR_2$ register of $\ket{\Phi_k^{\otimes t}}_{\regC_2,\regR_2}$ and the $\regZ$ register containing the advice state $\ket{\tau}_\regZ$.
    \item
    Choose $h\gets\cH$. Choose $y\gets \cY$.
    \item
    Apply $(I_{\regC_2}\otimes U_{\regR,\regZ})$ on $\ket{h,y}_{\regR_1}\ket{\Phi_k^{\otimes t}}_{\regC_2,\regR_2}\ket{0...0}_{\regR_3}\ket{\tau}_\regZ$.
    \item
    Measure the register $\regR_3$ in the computational basis, and output the result.
\end{enumerate}
The probability that $\cA$ outputs $k$ and $k\in G$ can then be lower bounded as
\begin{align*}
&\Pr\left[ k\in G\wedge k\gets\cA(1^\secp,\phi_k^{\otimes t(\secp)}) \;:\; k\gets\KeyGen(1^\secp),\ \phi_k\gets\StateGen(k) \right] \\
&= \sum_{k\in G} \Pr[k] \sum_{h,y} \frac1{|\cH| \cdot|\cY|}
\left\|\bra{k}_{\regR_3}
(I_{\regC_2}\otimes U_{\regR,\regZ})
\ket{h,y}_{\regR_1}\ket{\Phi_k^{\otimes t}}_{\regC_2,\regR_2}\ket{0...0}_{\regR_3}\ket{\tau}_\regZ\right\|^2\\
&= \sum_{k\in G,h} \frac{\Pr[k]}{|\cH|} \sum_y \frac{1}{2\lfloor 2^{r(\secp)}\rfloor}
\left\|\bra{k}_{\regR_3}
(I_{\regC_2}\otimes U_{\regR,\regZ})
\ket{h,y}_{\regR_1}\ket{\Phi_k^{\otimes t}}_{\regC_2,\regR_2}\ket{0...0}_{\regR_3}\ket{\tau}_\regZ\right\|^2\\
&\ge
\sum_{k\in G,h}\frac{\Pr[k]}{|\cH|}
\frac{1}{2\lfloor 2^{r(\secp)}\rfloor}
\left\|\bra{k}_{\regR_3}
(I_{\regC_3}\otimes U_{\regR,\regZ})
\ket{h,h(k)}_{\regR_1}\ket{\Phi_k^{\otimes t}}_{\regC_2,\regR_2}\ket{0...0}_{\regR_3}\ket{\tau}_\regZ\right\|^2\\
&\ge\frac{1}{4q(\secp)\lfloor 2^{r(\secp)}\rfloor},
\end{align*}
where the first inequality is obtained by only keeping the term~$y=h(k)$,
and the last inequality is from \cref{from_nobinding}.
Hence, the probability that $\cA$ wins is lowerbounded by
$\frac{1}{8q(\secp)\lfloor 2^{r(\secp)}\rfloor}$.

We claim that this is larger than~$2^{-\delta(\secp)}$, in contradiction to the $\delta$-exponential security of the IV-OWSG.
Indeed, we have
\begin{align*}
\frac{1}{8q(\secp)\lfloor 2^{r(\secp)}\rfloor}-\frac{1}{2^{\delta(\secp)}}
=\frac{2^{\delta(\secp)}-8q(\secp)\lfloor 2^{r(\secp)}\rfloor}{8q(\secp)2^{\delta(\secp)}\lfloor 2^{r(\secp)}\rfloor}
\ge\frac{2^{\delta(\secp)}-8q(\secp)2^{r(\secp)}}{8q(\secp)2^{\delta(\secp)}\lfloor 2^{r(\secp)}\rfloor}
\ge\frac{2^{(0.5 + D) \secp} - 8q(\secp)2^{(0.5 + \frac D 2) \secp}}{8q(\secp)2^{\delta(\secp)}\lfloor 2^{r(\secp)}\rfloor}
>0
\end{align*}
for infinitely many~$\secp$.
Thus we have obtained the desired contradiction, and we may conclude that our bit commitment scheme is computationally binding.

\paragraph{Statistical hiding.}
We first describe an (inefficient) algorithm that, given~$\ket{h,h(k)}_{\regR_1}$ and the~$\regR_2$ register of~$\ket{\Phi_k^{\otimes t(\secp)}}_{\regC_2,\regR_2}$, for~$k \leftarrow \KeyGen(1^\secp)$ and $h \leftarrow \cH$, outputs the key~$k$ with not too small probability:
\begin{enumerate}
\item Apply the shadow tomography procedure on the~$\regR_2$ register of $\ket{\Phi_k^{\otimes t(\secp)}}_{\regC_2,\regR_2}$ to obtain an estimate~$b_{k'}$ of $\Pr[\top\gets\Ver(k',\phi_k)]$ with additive error
$\varepsilon = 1/8$
for all~$k'\in\bit^\secp$, with failure probability at most~$\omega = 2^{-\secp}$.
Compute the list
$\mathcal L \coloneqq \{ k'\in\bit^\secp : b_{k'} \geq 3/4 \}$.
\item
Read $(h,h(k))$ from $\regR_1$.
Compute $h(k')$ for each $k'$ in the list~$\mathcal L$.
If there is only a single $k^*$ in the list such that $h(k^*)=h(k)$, output $k^*$.
Otherwise, output $\bot$.
\end{enumerate}
By \cref{lem:ST}, the first step is possible if we choose~$t(\secp)$ to be a large enough polynomial.
Then, with probability at least~$1-\omega=1-\negl(\lambda)$, the list~$\mathcal L$ computed in the first step satisfies
\begin{align}
\left\{ k' \in \bit^\secp : \Pr[\top\gets\Ver(k',\phi_k)] \geq \frac{7}{8} \right\}
\subseteq \mathcal L
\subseteq G_k,
\end{align}
where $G_k =  \left\{ k' \in \bit^\secp : \Pr[\top\gets\Ver(k',\phi_k)] \geq \frac{1}{2} \right\}$ is the set defined in \cref{lem:1}.
By the correctness of the IV-OWSG, it holds that, except for a negligible fraction of $k$, $\Pr[\top\gets\Ver(k,\phi_k)]\geq \frac{7}{8}$ holds for sufficiently large~$\secp$, in which case the first inclusion implies that~$k \in \mathcal L$.
This means that we have
\begin{align}\label{eq:step 1}
  \Pr\left[ k \in \mathcal L \subseteq G_k : k \leftarrow \KeyGen(1^\secp) \right] \geq 1 - \negl(\secp).
\end{align}
We claim that the second step will return~$k$ with not too small probability.
To see this, we first note that for any fixed set~$G \subseteq \bit^\secp$ and for any fixed~$x \in G$, it holds that
\begin{align*}
  \Pr_{h \leftarrow \cH}\left[ \lvert G \cap h^{-1}(h(x)) \rvert = 1 \right]
&= 1 - \Pr_{h \leftarrow \cH}\left[ \bigvee_{x' \in G, x' \neq x} h(x') = h(x) \right] \\
&\geq 1 - \sum_{x' \in G, x' \neq x} \Pr_{h \leftarrow \cH}\left[ h(x') = h(x) \right]
= 1 - \frac{|G|-1}{|\cY|}
\end{align*}
due to the pairwise independence of the family of hash functions and the union bound.
Thus,
\begin{align*}
&\quad \Pr\left[ \mathcal L \cap h^{-1}(h(k)) = \{k\} : k \leftarrow \KeyGen(1^\secp), \ h \leftarrow \cH \right] \\
&= \Pr\left[ \lvert \mathcal L \cap h^{-1}(h(k)) \rvert = 1,\ k \in \mathcal L \;:\; k \leftarrow \KeyGen(1^\secp), \ h \leftarrow \cH \right] \\
&\geq \Pr\left[ \lvert G_k \cap h^{-1}(h(k)) \rvert = 1,\ k \in \mathcal L \subseteq G_k \;:\; k \leftarrow \KeyGen(1^\secp), \ h \leftarrow \cH \right] \\
&\geq \Pr_{h \leftarrow \cH}\left[ \lvert G_k \cap h^{-1}(h(k)) \rvert = 1, \ k \in G_k \;:\; k \leftarrow \KeyGen(1^\secp) \right] \left( 1 - \negl(\lambda) \right) \\
&\geq \Pr_{h \leftarrow \cH}\left[ \lvert G_k \cap h^{-1}(h(k)) \rvert = 1, \ k \in G_k, \ |G_k| \leq 2^{r(\secp)} \;:\; k \leftarrow \KeyGen(1^\secp) \right] \left( 1 - \negl(\lambda) \right) \\
&\geq \Pr_{h \leftarrow \cH}\left[ \lvert G_k \cap h^{-1}(h(k)) \rvert = 1 \;:\; k \leftarrow \KeyGen(1^\secp), \ k \in G_k, \ |G_k| \leq 2^{r(\secp)} \right] \left( 1 - \negl(\lambda) \right)^2 \\
&\geq \left( 1-\frac{2^{r(\secp)}-1}{2\lfloor 2^{r(\secp)}\rfloor} \right) \left( 1 - \negl(\lambda) \right)
\geq \frac{1}{2} - \negl(\lambda),
\end{align*}
where the second inequality uses \cref{eq:step 1} and the fact that~$h$ and~$\mathcal L$ are independent conditioned on~$k$, the fourth inequality uses \cref{lem:1}, and the last line follows from the bound derived above.
We conclude that the algorithm returns the correct key with probability at least~$\frac12 - \negl(\secp)$.

We may summarize the above as saying that there exists an (inefficient) POVM measurement~$\{ \Pi_{\regR_1,\regR_2}^{(\alpha)} \}_{\alpha \in \bit^\secp \cup \{\bot\}}$ such that
\begin{equation}\label{eq:hiding summary}
    \sum_k \sum_{h \in \cH} \frac{\Pr[k]}{|\cH|}
\bra{h,h(k)}_{\regR_1} \bra{\Phi_k^{\otimes t(\lambda)}}_{\regC_2,\regR_2}
\Pi_{\regR_1,\regR_2}^{(k)}
\ket{h,h(k)}_{\regR_1} \ket{\Phi_k^{\otimes t(\lambda)}}_{\regC_2,\regR_2} \geq \frac12 - \negl(\secp).
\end{equation}
This implies that there is an (inefficient) unitary~$U_{\regG,\regZ}$ that sends~$\ket{\psi_0}_{\regR,\regC} \ket0_{\regZ}$ to a state with not too small overlap with~$\ket{\psi_1}_{\regR,\regC} \ket0_{\regZ}$.
To see this, choose a unitary~$V_{\regR_1,\regR_2,\regZ}$ that extends the Naimark dilation~$V_{\regR_1,\regR_2,\regZ} \ket0_{\regZ} = \sum_\alpha (\Pi_{\regR_1,\regR_2}^{(\alpha)})^{1/2} \otimes \ket\alpha_{\regZ}$ of the POVM measurement, and define~$U_{\regG,\regZ} \coloneqq V_{\regR_1,\regR_2,\regZ}^\dagger {\mathrm{CNOT}}_{\regZ\to\regR_3} V_{\regR_1,\regR_2,\regZ}$.
Then it holds that
\begin{align*}
&\quad \bra{\psi_1}_{\regR,\regC} \bra0_{\regZ} U_{\regR,\regZ} \ket{\psi_0}_{\regR,\regC} \ket0_{\regZ} \\
&= \sum_k \sum_{h \in \cH} \frac{\Pr[k]}{|\cH|} \bra{h,h(k)}_{\regR_1} \bra{\Phi_k^{\otimes t(\lambda)}}_{\regC_2,\regR_2} \bra{k}_{\regR_3} \bra{0}_{\regZ} U_{\regR,\regZ} \ket{h,h(k)}_{\regR_1} \ket{\Phi_k^{\otimes t(\lambda)}}_{\regC_2,\regR_2} \ket{0}_{\regR_3} \ket{0}_{\regZ} \\
&= \sum_k \sum_{h \in \cH} \frac{\Pr[k]}{|\cH|} \sum_{\alpha,\beta} \bra{h,h(k)}_{\regR_1} \bra{\Phi_k^{\otimes t(\lambda)}}_{\regC_2,\regR_2}
\sqrt{\Pi_{\regR_1,\regR_2}^{(\alpha)}}
\bra{\alpha,k}_{\regZ,\regR_3} {\mathrm{CNOT}}_{\regZ\to\regR_3} \ket{\beta,0}_{\regZ,\regR_3} \\
&\qquad\qquad\qquad\qquad\quad \sqrt{\Pi_{\regR_1,\regR_2}^{(\beta)}}
\ket{h,h(k)}_{\regR_1} \ket{\Phi_k^{\otimes t(\lambda)}}_{\regC_2,\regR_2} \\
&= \sum_k \sum_{h \in \cH} \frac{\Pr[k]}{|\cH|}
\bra{h,h(k)}_{\regR_1} \bra{\Phi_k^{\otimes t(\lambda)}}_{\regC_2,\regR_2}
\Pi_{\regR_1,\regR_2}^{(k)}
\ket{h,h(k)}_{\regR_1} \ket{\Phi_k^{\otimes t(\lambda)}}_{\regC_2,\regR_2} \\
&\geq \frac12 - \negl(\secp),
\end{align*}
where the last step is due to \cref{eq:hiding summary}.
Hence
\begin{align*}
  \left| \bra{\psi_1}_{\regR,\regC} \bra0_{\regZ} U_{\regR,\regZ} \ket{\psi_1}_{\regR,\regC} \ket0_{\regZ} \right|^2
\geq \frac14 - \negl(\secp).
\end{align*}
By Uhlmann's theorem, it follows that the reduced states $\rho_b\coloneqq\Tr_{\regR}(\proj{\psi_b}_{\regR,\regC})$ have fidelity
\[
F(\rho_0,\rho_1) \geq \frac14 - \negl(\lambda),
\]
and hence their trace distance is at most
\[
  \TD(\rho_0,\rho_1)
\leq \sqrt{\frac34} + \negl(\lambda).
\]
In view of \cref{eq:stat hiding via TD} this implies that the scheme is statistically $(\sqrt{\frac34} + \negl(\lambda))$-hiding, which in particular means that it is statistically $(1-1/\poly(\secp))$-hiding.
This completes the proof of \Cref{thm:EFI_from_IV-OWSG}. 
\if0
\paragraph{Parallel Repetition.} The above argument shows that there exists a commitment which is computationally binding with negligible probability and $\frac{1}{\sqrt{2}} + \negl(\lambda)$ statistically hiding. The conversion theorem from  \cite{EC:HhaMorYam23} implies that there exists a commitment that is $\negl(\lambda)$ computationally hiding and $\frac{1}{\sqrt{2}} + \negl(\lambda)$ statistically binding. \takashi{I guess there is a quadratic loss in the flavor conversion, and thus this should be $\frac{1}{2^{1/4}} + \negl(\lambda)$ statistically binding. I'll check it and deal with it soon.} By a standard parallel repetition argument, we obtain a commitment which is both statistically binding and computationally hiding with negligible error.
\fi
\end{proof}

\subsection*{Acknowledgements}

GM~was supported by the European Research Council through an ERC Starting Grant (Grant agreement No.~101077455, ObfusQation).
GM and MW acknowledge support by the Deutsche Forschungsgemeinschaft (DFG, German Research Foundation) under Germany's Excellence Strategy - EXC\ 2092\ CASA - 390781972.
MW also acknowledges support by the European Research Council through an ERC Starting Grant (grant agreement No.~101040907, SYMOPTIC), by the NWO through grant OCENW.KLEIN.267, and by the BMBF through project Quantum Methods and Benchmarks for Resource Allocation (QuBRA).
TM is supported by JST CREST JPMJCR23I3, JST Moonshot JPMJMS2061-5-1-1, JST FOREST, MEXT QLEAP, the Grant-in Aid for Transformative Research Areas (A) 21H05183, and the Grant-in-Aid for Scientific Research (A) No.22H00522.

\ifnum\submission=0
\bibliographystyle{alpha}
\else
\bibliographystyle{splncs04}
\fi
\bibliography{abbrev3,crypto,reference}

\end{document}